\newtheorem{gamedefn}{Game}
\newcommand\reals{\mathbb{R}}
\newcommand\EXP{\mathbb E}
\newcommand\PR{\mathds {P}}
\newcommand{\IND}{\mathds{1}}
\newcommand\ALPHABET{\mathcal}
\newcommand*\Player[1]{\ensuremath{\mathsf{P}_{#1}}}
\newcommand{\teamset}{\ALPHABET{K}}
\newcommand{\numteams}{K}
\newcommand{\team}{k}
\newcommand\agentsteam[1]{N^{(#1)}}
\newcommand\agentsteamset[1]{\mathcal{N}^{(#1)}}
\newcommand\StSpteam[1]{\ALPHABET{\St}^{(#1)}}
\newcommand\AcSpteam[1]{\ALPHABET{\Ac}^{(#1)}}
\newcommand{\St}{S}
\newcommand{\Ac}{A}
\newcommand{\st}{s}
\newcommand{\ac}{a}
\newcommand\Stteam[1]{\St^{(#1)}}
\newcommand\Acteam[1]{\Ac^{(#1)}}
\newcommand\stteam[1]{\st^{(#1)}}
\newcommand{\mffunc}{\xi}
\newcommand{\MF}{Z}
\newcommand{\mf}{z}
\newcommand\MFteam[1]{\MF^{(#1)}}
\newcommand\mfteam[1]{\mf^{(#1)}}
\newcommand\MFteamSpace[1]{\ALPHABET \MF^{(#1)}}
\newcommand\MFSpace{\ALPHABET \MF^*}
\newcommand{\SMF}{\bar{\MF}}
\newcommand{\smf}{\bar{\mf}}
\newcommand\smfteam[1]{\smf^{(#1)}}
\newcommand\SMFSpace{\bar{\ALPHABET \MF^*}}
\newcommand\stprobteam[1]{P^{(#1)}}
\newcommand{\info}{I}
\newcommand{\pol}{\pi}
\newcommand\polteam[1]{\pol^{(#1)}}
\newcommand\optpolteam[1]{\pol^{*(#1)}}
\newcommand{\cost}{c}
\newcommand{\Cost}{C}
\newcommand\costteamagent[1]{\cost^{(#1)}}
\newcommand\costteam[1]{\Cost^{(#1)}}
\newcommand{\costfunction}{\ell}
\newcommand\costfunctionteam[1]{\costfunction^{(#1)}}
\newcommand{\Costfunction}{L}
\newcommand\Costfunctionteam[1]{\Costfunction^{(#1)}}
\newcommand{\scostfunction}{\bar{\ell}}
\newcommand\scostfunctionteam[1]{\scostfunction^{(#1)}}
\newcommand{\perf}{J}
\newcommand\perfteam[1]{\perf^{(#1)}}
\newcommand{\prescription}{\gamma}
\newcommand\prescriptionteam[1]{\prescription^{(#1)}}
\newcommand{\Prescription}{\Gamma}
\newcommand\Prescriptionteam[1]{\Prescription^{(#1)}}
\newcommand{\corlaw}{\psi}
\newcommand\corlawteam[1]{\corlaw^{(#1)}}
\newcommand{\scorlaw}{\bar{\psi}}
\newcommand\scorlawteam[1]{\scorlaw^{(#1)}}
\newcommand\optcorlawhist{\varphi^*}
\newcommand{\corlawhist}{\varphi}
\newcommand\corlawhistteam[1]{\corlawhist^{(#1)}}
\newcommand\optcorlawhistteam[1]{\corlawhist^{*(#1)}}
\newcommand{\MFGT}{\text{MFGT}}
\newcommand*\W{d_{\mathcal{W}}}
\begin{document}

\title{Mean-field games among teams\thanks{The work of AM was supported by Natural Sciences and Engineering Research Council of Canada, Discovery Grant RGPIN-2021-03511.}}
\author{%
Jayakumar Subramanian 
\and Akshat Kumar \and Aditya Mahajan}

\institute{J. Subramanian \at 
Media and Data Science Research Lab, Digital Experience Cloud, Adobe Inc., \\ Noida, Uttar Pradesh, India, \email{jasubram@adobe.com}
\and A. Kumar \at
School of Computing and Information Systems at the Singapore Management University, Singapore, \email{akshatkumar@smu.edu.sg}
\and A. Mahajan \at Department of Electrical and Computer
Engineering, \\ McGill University, Montreal, Canada, \email{aditya.mahajan@mcgill.ca}}

\date{Received: date / Accepted: date}
\maketitle

\begin{abstract}
In this paper, we present a model of a game among teams. Each team consists of a homogeneous population of agents. Agents within a team are cooperative while the teams compete with other teams. The dynamics and the costs are coupled through the empirical distribution (or the mean field) of the state of agents in each team. This mean-field is assumed to be observed by all agents. Agents have asymmetric information (also called a non-classical information structure). We propose a mean-field based refinement of the Team-Nash equilibrium of the game, which we call mean-field Markov perfect equilibrium (MF-MPE). We identify a dynamic programming decomposition to characterize MF-MPE. We then consider the case where each team has a large number of players and present a mean-field approximation which approximates the game among large-population teams as a game among infinite-population teams. We show that MF-MPE of the game among teams of infinite population is easier to compute and is an $\varepsilon$-approximate MF-MPE of the game among teams of finite population.
\end{abstract}

\keywords{Mean-field games among teams, Team-Nash equilibrium, Markov perfect equilibrium, large population games among teams} 

\maketitle
\section{Introduction}
Traditionally, agents in a multi-agent system are modeled either as cooperative agents who optimize a common system-wide objective or as strategic agents who optimize individual objectives. This difference in the behavior of agents leads to different conceptual difficulties and different solution concepts. As a result, the two settings are studied as separate sub-disciplines of decision theory: models with cooperative agents are studied under the heading of team theory~\citep{MarschakRadner_1972} and models with strategic agents are studied under the heading of game theory~\citep{VonMorgenstern_2007}. For the most part, these two research subdisciplines have evolved independently.

However, in recent years, many applications have emerged which may be considered as games among teams. Examples include:
multiple demand aggregators competing in the same electricity markets,
multiple ride-sharing companies competing in the same city,
multiple firms competing in the same industry with different levels of competitive advantages~\citep{WeintraubBenkardVanRoy_2008}, and
the DARPA Spectrum Sharing Challenge, where teams of multiple radio units compete with other such teams in the same geographic area~\citep{DARPA}.

In such applications, teams of agents compete with other teams of agents. These models are different from traditional team theory models because agents belonging to different teams have separate objectives and are, therefore, not cooperative. These models are also different from traditional game theory models because agents belonging to the same team can enter into pregame agreements; therefore, the notion of equilibrium in games among teams must account for the possibility of simultaneous and coordinated deviations by all agents belonging to the same team. Such an equilibrium is called a \emph{Team-Nash equilibrium}~\citep{tang2021dynamic}. 

Games among teams are also different from cooperative games~\citep{shapley1953value}. In a game among teams, the team structure (i.e, which agent belongs to which team) is pre-specified and unlike cooperative game theory, the process of team formation and how to distribute rewards among members of the team is not investigated. 

\begin{figure}[htb]
  \centering
\begin{game}{2}{2}[$I$]
& $L$ & $R$\\
$T$ &$3,3,1$ &$0,0,0$\\
$B$ &$0,0,0$ &$1,1,3$
\end{game}\qquad
\begin{game}{2}{2}[$II$]
& $L$ & $R$\\
$T$ &$0,0,0$ &$1,1,5$\\
$B$ &$2,2,2$ &$0,0,0$
\end{game}
\caption{A game between the team of \Player1 (who chooses the row) and \Player2 (who chooses the column) versus \Player3 (who chooses the matrix $I$ or $II$).}
\label{fig:ex1}
\end{figure}

To illustrate the difference between Nash equilibrium (NE) and Team-Nash equilibrium (TNE), we consider a static game among two teams shown in Fig.~\ref{fig:ex1}. If we view the above as multiplayer game with three players, then the game has four NE in pure strategies: $\{ (T,L,I), (B,R,I), (T,R,II), (B,L,II) \}$. Of these, only $\{ (T,L,I), (B,R,II) \}$ are TNE. $(B,R,I)$ is not TNE because the team of \Player1 and \Player2 can deviate to $(T,L)$ and obtain a higher payoff. Thus, the key difference between NE and TNE is that in TNE players belonging to the same team can deviate together. 

When agents in a team have symmetric information (as is the case in the example above), the game among teams can be reduced to a regular game by considering a single player with vector-valued actions. However, such a reduction is not possible when players belonging to the same team have \emph{asymmetric information}. The situation is even more complicated for multi-stage (or dynamic) games, where games among teams inherit all the conceptual challenges of dynamic games with asymmetric information. 

In many of the motivating applications described above, each team has a large number of agents. So, we investigate games among teams where each team has a large number of agents and call such systems \emph{mean-field games among teams} (or \MFGT\ for short).  In the last decade, various \emph{mean-field approximations} of teams and games with a large number of agents have been proposed in the literature. The general flavor of the results are as follows. 
For games, it is shown that 
an appropriate refinement of a Nash equilibrium of a game with a large number of players can be approximated by an equilibrium solution of a game with an infinite number of players~\cite{HuangCainesMalhame_2007,HuangCainesMalhame_2012,LasryLions_2007,WeintraubBenkardVanRoy_2008} (and follow up literature).
Similarly, for teams, it is shown that a globally optimal solution of a team with a large number of players can be approximated by an optimal solution of a team with infinite number of players~\cite{ArabneydiMahajan_2014, ArabneydiMahajan_2016,bauerle2023mean,elliott2013discrete}.  

Our main contribution is to show that a similar high-level idea works for games among teams. In particular:
\begin{itemize}
  \item We present a model of multi-stage games among teams where each team has a large number of agents. 
  \item We propose a mean-field based Markov perfect equilibrium (MF-MPE) for the game among teams and present a dynamic programming decomposition to compute MF-MPE.
  \item When each team has a large number of players, we approximate the system with a game among teams with infinite players and show that any MF-MPE of the infinite population game among teams is an $\varepsilon$-approximate MF-MPE of the large population game among teams, where we provide an upper bound on $\varepsilon$. Our approximation results are different from typical infinite-population approximations in mean-field games (MFG). In MFG, as the number of players becomes large, each player has negligible influence on the dynamics and payoffs of other agents. However, in the game among teams, the number of teams remains fixed, so a different approach is required to establish the approximation results.
\end{itemize}

The rest of the paper is organized as follows. In Sect.~\ref{sec:review}, we review the relevant literature. In Sect.~\ref{sec:model}, we present the system model and problem formulation for mean-field game among teams. In Sect.~\ref{sec:solution}, we present an equivalent game among coordinators for teams and show that any MPE (Markov perfect equilibrium) of the game among coordinators is a Team-Nash equilibrium of the original game, which we call MF-MPE (mean-field MPE). We also present a dynamic program to characterize MF-MPE. In Sec.~\ref{sec:mf}, we present a mean-field approximation of MF-MPE when each team has a large number of players and we conclude in Sect.~\ref{sec:conclusion}

\section{Literature Overview}\label{sec:review}
There has been some recent interest in modeling and analyzing games among teams. A dynamic game among teams with delayed intra-agent information sharing is considered in~\citet{tang2021dynamic}, where common-information based refinements for Team-Nash equilibrium are presented. The results of \citet{tang2021dynamic} consider teams with general heterogeneous agents and no simplifications due to homogeneity and large number of agents are consider. Such mean-field approximations for games among teams are considered in~\citet{pedram2019,yu2020teamwise,sanjari2022nash,guan2023zerosum} and we summarize their results below.

\cite{pedram2019} are motivated by transportation networks and focus on designing incentive mechanisms to mitigate congestion in routing games over graphs. Their main result is proposing a toll mechanism, establishing a mean-field limit of the resulting large population game among teams, and showing that the resulting mean-field equilibrium can be computed efficiently. Note that the dynamics and reward models considered in our setup are different from \citet{pedram2019}.

\citet{yu2020teamwise} are motivated large firms aiming to develop new products or technologies with a rank-based rewards, where each team member contributes to the jump intensity of a common Poisson process, and the reward received by each team depends on its ranking. Different settings are considered for determining the team size: (i)~by a central planner; (ii)~by a team manager; and (iii) by team members voting in a partnership. Their main result is to propose a relative performance criteria which enables an explicit computation of the equilibrium solution. The also establish that the equilibrium eliminates the free riding problem. Note that the dynamics and reweard models considered in our setup are different from \citet{yu2020teamwise}.

\citet{sanjari2022nash} consider a generalization of Witsenhausen's intrinsic model \citep{witsenhausen1975} in where there are multiple agents that act sequentially. Each agent acts only once. Agents belong to one of finite number of teams and all agents within a team are exchangeable. They establish that the setting with large number of agents in each team can be approximated by an infinite population mean-field limit. Moreover, there exists a Nash equilibrium for the infinite population limit which is symmetric (i.e., each agent in the team considers identical strategies) and independently randomized. Note that in contrast to \citet{sanjari2022nash} we consider dynamic games (where each agent acts more than once) and consider a refinement of Markov perfect equilibrium.
\citet{guan2023zerosum} zero-sum games between two teams is considered. Under some technical assumptions, it is established that the optimal strategies of each team can be computed via a common-information based dynamic programming decomposition. It is then established that such games among teams with large number of agents can be approximated by their mean-field limit. Note that in contrast to \citet{guan2023zerosum}, we consider general sum games.
\section{Model and problem formulation}\label{sec:model}
\subsection{Model of mean-field games among teams}
\subsubsection{State and action spaces}
Consider a multi-agent system with $\numteams$ teams of homogeneous agents. Team 
$\team \in \teamset \coloneqq \{1, \dots, \numteams\}$ consists of $\agentsteam{k}$ agents
denoted by the set $\agentsteamset{k}$,
with state space $\StSpteam{k}$ and action space $\AcSpteam{k}$. We assume
that $\StSpteam{k}$ and $\AcSpteam{k}$ are finite sets. At time $t$, the
state and action of a generic agent $i$ in the team $k$ are denoted by
$\St^i_t \in \StSpteam{k}$ 
and $\Ac^i_t \in \AcSpteam{k}$, respectively. Moreover, let
\begin{equation*}
  \Stteam{k}_t = (\St^i_t)_{i \in \agentsteamset{k}}
  \quad\text{and}\quad
  \Acteam{k}_t = (\Ac^i_t)_{i \in \agentsteamset{k}}
\end{equation*}
denote the states and actions of all agents in team $k$ and
\begin{equation*}
  \St_t = (\Stteam{k}_t)_{k \in \teamset} 
  \quad\text{and}\quad
  \Ac_t = (\Acteam{k}_t)_{k \in \teamset}
\end{equation*}
denote the global state and actions of the entire system.

Given $\stteam{k} = (\st^i)_{i \in \agentsteamset{k}}$, $\st^i \in
\StSpteam{k}$, we use
$\mffunc(\stteam{k})$ to denote the mean field (or empirical distribution) of $\stteam{k}$, i.e., 
\begin{equation*}
  \mffunc(\stteam{k}) = \frac{1}{\agentsteam{k}}\sum_{i \in
  \agentsteamset{k}}\delta_{\st^i},
\end{equation*}
where $\delta_{\st}$ is a Dirac delta measure centered at $\st$. 
We use $\MFteam{k}_t = \mffunc(\Stteam{k}_t)$ to denote the mean field of the team $k$, $\MFteamSpace{k}$ to denote the space of realizations of $\MFteam{k}_t$,
$\MF_t = (\MFteam{k}_t)_{k \in \teamset}$ to denote the mean field of the
entire system, and $\MFSpace$ to denote the space of realizations of
$\MF_t$. With a slight abuse of notation, we use $Z_t = \mffunc(\St_t)$ to denote the mean-field 
of the entire system corresponding to the global state $\St_t$. 

\subsubsection{System dynamics} 
We use $(\st_{1:T}, \ac_{1:T})$ to denote a realization of $(\St_{1:T}, \Ac_{1:T})$
and $\mfteam{k}_t = \mffunc(\stteam{k}_t)$ to denote the mean field at time $t$. We assume that the initial states
of all agents are independent, i.e., 
\[
  \PR(\St_1 = \st_1)  = 
  \smashoperator[l]{\prod_{k \in \teamset}}
  \smashoperator[r]{\prod_{i \in \agentsteamset{k}}}
  \PR(\St^i_1 = \st^i_1) 
   \eqqcolon 
  \smashoperator[l]{\prod_{k \in \teamset}}
  \smashoperator[r]{\prod_{i \in \agentsteamset{k}}}
   \stprobteam{k}_0(\st^i_1),
\]
where $\stprobteam{k}_0$ denotes the initial state distribution of agents in
team $k$. The global state of the system evolves independently across agents in a
controlled Markov manner, i.e., 
\begin{multline*}
  \PR( \St_{t+1} = \st_{t+1} \mid \St_{1:t} = \st_{1:t}, \Ac_{1:t} = \ac_{1:t}) 
  \\
 =\prod_{k \in \teamset}
   \prod_{i \in \agentsteamset{k}}\PR(\St^i_{t+1} = \st^i_{t+1} \mid \St_t =
   \st_t, \Ac_t = \ac_t).
\end{multline*}
Finally, all agents within a team are exchangeable. So the state evolution of a generic agent depends on 
the states and actions of other agents only through the mean-fields of the states of the teams, i.e., 
for agent $i$ in team $k$,
\begin{align*}
  \PR(\St^i_{t+1} = \st^i_{t+1}  \mid \St_t = \st_t, \Ac_t = \ac_t) 
  &= \PR(\St^i_{t+1} = \st^i_{t+1}  \mid \St^i_t=\st^i_t, \Ac^i_t = \ac^i_t, \MF_t = \mf_t) 
  \\
   &\eqqcolon \stprobteam{k}(\st^i_{t+1} \mid \st^i_t, \ac^i_t, \mf_t),
\end{align*}
where $\stprobteam{k}$ denotes the controlled transition matrix for team~$k$.

Combining all of the above, we have 
\begin{equation}
  \PR(\St_{t+1} = \st_{t+1}  \mid \St_{1:t} = \st_{1:t}, \Ac_{1:t} = \ac_{1:t})   
   = \prod_{k \in \teamset}\prod_{i \in \agentsteamset{k}}\stprobteam{k}(\st^i_{t+1} \mid \st^i_t, \ac^i_t, \mf_t).
  \label{eq:dynamics}
\end{equation}

\subsubsection{Cost function}
There is a cost function: $\costteamagent{k}_t: \StSpteam{k} \times \AcSpteam{k} \times \MFSpace \to \reals$ associated with each agent in team $k$. The per-step cost incurred by team $k$ is the average of the cost associated with all agents in the team, i.e, 
\begin{equation} \label{eq:ind-to-team-cost}
  \costteam{k}_t = \frac{1}{\agentsteam{k}}\sum_{i \in
  \agentsteamset{k}}\costteamagent{k}_t(\St^i_t, \Ac^i_t, \MF_t).
\end{equation}

\subsubsection{Information structure and control laws}
We assume that the system has mean-field sharing
information-structure~\citep{ArabneydiMahajan_2014}, i.e., each agent has access to its local state $\St^i_t$ and the history of mean-field $\MF_{1:t}$ of all teams. Thus, the information
available to agent $i$ is given by:
\begin{equation}
  \info^i_t = \{\St^i_t, \MF_{1:t}\}.
\end{equation}
In addition, we assume that all agents in team $k$ use identical%
\footnote{Restricting attention to identical strategies for all
  agents in a team may result in a loss of optimality for that team. However, as
  argued in~\citet{ArabneydiMahajan_2014}, such a restriction is often
  justifiable due to other considerations such as simplicity, robustness and
fairness.}
(stochastic) control laws:
\(
  \polteam{k}_t \colon \StSpteam{k} \times {\MFSpace}^t 
   \to \Delta(\AcSpteam{k})
\)
to choose the action at time $t$, i.e., 
\begin{equation}
  \Ac^i_t \sim \polteam{k}_t(\St^i_t, \MF_{1:t}),
\end{equation}
where each agent randomizes independently.
Let $\polteam{k} \coloneqq (\polteam{k}_1, \dots, \polteam{k}_T)$ denote the
strategy of team~$k$. We use $\polteam{-k}$ to denote the strategy of all teams
other than~$k$. 
Given a strategy $\pol = (\polteam{k}, \polteam{-k})$ for all teams, the expected
total cost incurred by team $k$ is given by the following:
\begin{equation}
  \perfteam{k}(\polteam{k}, \polteam{-k}) = \EXP^{(\polteam{k},
\polteam{-k})}\biggl[ \sum_{t=1}^T \costteam{k}_t \biggr].
\end{equation}

It is assumed that the system dynamics $(\stprobteam{k}_0)_{k \in \teamset}$, $(\stprobteam{k})_{k \in \teamset}$, 
the cost functions $(\costteamagent{k})_{k \in \teamset}$ and the information structure are
common knowledge for all agents. Each team is interested in minimizing the total
expected cost incurred over a finite horizon. 
Following~\citet{tang2021dynamic}, we say that a strategy profile $\pol^* = (\pol^{*(k)})_{k \in \teamset}$ is a \textbf{Team-Nash equilibrium} if for all teams $k \in \teamset$ and all other strategy profiles $\polteam{k}$ for team $k$, we have:
\begin{equation}
    \perfteam{k}(\optpolteam{k}, \optpolteam{-k}) \le
    \perfteam{k}(\polteam{k}, \optpolteam{-k}).
  \end{equation}
In the sequel, we refer to the model defined above as mean-field game among teams (\MFGT). 
We are interested in the following:
\begin{gamedefn}\label{game:original}
  Identify a Team-Nash equilibrium of the mean-field game among teams (\MFGT) model defined above. 
\end{gamedefn}

\subsection{Salient features of the model}
Some salient features of the \MFGT\ model are as follows:
\subsubsection{Dynamic Game with asymmetric information}
\MFGT\ is a dynamic game (also called stochastic game), where there is a state space model which describes the evolution of the state of the environment. Agents have imperfect and asymmetric information about the state of the environment. 
\subsubsection{All agents in a team are homogeneous}
We have assumed that all agents in a team have homogeneous dynamics and cost functions. This assumption is made only for ease of exposition. It is conceptually straightforward to extend the discussion to models with heterogeneous agents where the agents have multiple types. In fact, such a model can be converted into a model with homogeneous agents by augmenting the state space and considering the type of each agent to be a (static) component of its state.
\subsubsection{All agents in a team play identical strategies}
In a general Team-Nash equilibrium~\citep{tang2021dynamic}, all agents in a team are allowed to deviate together and in a correlated manner. However, we have imposed an additional assumption that all agents in a team play identical strategies. Under this assumption, when agents in a team consider deviations, they only consider deviations in which all agents of that team are playing identical (randomized) strategies.
\subsubsection{Agents in a team randomize independently}
We have assumed that all agents randomize independently. In principle, at each time, agents in the same team could randomize in a correlated manner, but we do not consider that setup in this paper. Correlated randomizations can be obtained by augmenting the state space of all agents in a team with a common correlating random variable, which is independent over time. 

\section{Mean-field based Markov perfect equilibrium (MF-MPE)} \label{sec:solution}

\subsection{Road map of the solution approach}

Our main conceptual idea is as follows. First, we start by an alternative, but equivalent representation of the mean-field in terms of the state counts. Borrowing the idea of prescriptions (i.e., partially evaluated strategies) from decentralized stochastic control~\citep{NayyarMahajanTeneketzis_2013}, we show that state-counts (and, therefore, the mean-field) is a controlled Markov process controlled by prescriptions. Using these results, we show that for any team~$k$ and any arbitrary but fixed strategy $\pi^{-k}$ of all teams other than~$k$, the mean-field $\{\MF_t\}_{t \ge 1}$ is an \emph{information state} for players of team~$k$ in the sense of \citet{aisjmlr} (see Proposition~\ref{prop:IS}). Therefore, we can follow the idea of game between virtual players introduced by \citet{Nayyar:game}, to propose a game between $K$ virtual players, where virtual player $k \in \teamset$ observes the mean-field $\MF_t$ and chooses the prescription for all agents in team~$k$. We call this game, Game~\ref{game:common}, and show that Game~\ref{game:common} is equivalent to Game~\ref{game:original}. In particular, any Nash equilibrium of Game~\ref{game:common} is a Team-Nash equilibrium of Game~\ref{game:original} (see Theorem~\ref{thm:equiv}). 

Game~\ref{game:common} is a dynamic game with perfect information. Following~\cite{maskin1988theory,maskin1988theory2}, we can identify a dynamic program to characterize the Markov perfect equilibria (i.e., a Nash equilibria where all players play Markov strategies) of Game~\ref{game:common} (see Theorem~\ref{thm:MPE}). By Theorem~\ref{thm:equiv}, any MPE identified by the dynamic program of Theorem~\ref{thm:MPE} is a Team-Nash equilibrium of Game~\ref{game:original}. We call such Team-Nash equilibria as MF-MPE (mean-field Markov perfect equilibria).

\subsection{A count-based representation of the mean-field}

We start by an alternative, but equivalent, representation of the mean-field in terms of state counts.
Count-based representation has been explored earlier in collective decentralized POMDPs~\citep{NguyenKL17} for systems with a single team. We generalize these ideas to mean-field systems with multiple teams. We start by defining different types of counts: 
\begin{itemize}
  \item \emph{State counts}, denoted by $M^{(k)}_t$, which count the number of agents of team $k$ in each state and is given by
    \begin{equation*}
      M^{(k)}_t(\st) = \smashoperator[l]{\sum_{i \in \agentsteam{k}}}
      \IND\{\St^i_t = \st\},
     \quad \forall \st \in \StSpteam{k}.
    \end{equation*}
  \item \emph{State-action counts}, denoted by $\widebar{M}^{(k)}_t$, count the number of agents of team $k$ in each state-action pair and is given by
  \[
      \widebar{M}^{(k)}_t(\st, \ac) = 
      \smash{\smashoperator[l]{\sum_{i \in \agentsteam{k}}}}
      \IND\{\St^i_t = \st,{} \Ac^i_t = \ac\},
      \quad
      \forall \st, \ac \in \StSpteam{k} \times \AcSpteam{k}.
 \]
  \item \emph{State-action-next state counts}, denoted by $\widehat{M}^{(k)}_t$, count the number of agents of team $k$ in each state-action-next-state tuples, and is given by:
  \[
      \widehat{M}^{(k)}_t(\st, \ac, \st') =
      \smash{\smashoperator[l]{\sum_{i \in \agentsteam{k}}}}
      \IND\{\St^i_t = \st, \Ac^i_t = \ac, \St^i_{t+1} = \st'\}, 
      \quad
       \forall \st, \ac, \st' \in \StSpteam{k} \times \AcSpteam{k} \times \StSpteam{k}.
  \]
\end{itemize}
Similar to mean-field, we use 
$M_t = (M^{(k)}_t)_{k \in \teamset}$, 
$\widebar M_t = (\widebar M^{(k)}_t)_{k \in \teamset}$, 
and
$\widehat M_t = (\widehat M^{(k)}_t)_{k \in \teamset}$
to denote the counts for the entire system.

Note that the state counts are equivalent to the empirical mean field, i.e., $M^{(k)}_t = \agentsteam{k} \MFteam{k}_t$, or equivalently, $\MFteam{k}_t = M^{(k)}_t/\agentsteam{k}$. We use $Z_t = \mu(M_t)$ to denote the vector of mean-fields equivalent to the vector $M_t$ of counts.

The main advantage of a count-based representation is that it captures the inherent symmetry in the model due to the homogeneity of agents. For example, the state dynamics~\eqref{eq:dynamics} may be written as
\begin{multline}
  \PR(\St_{t+1} = \st_{t+1}  \mid \St_{1:t} = \st_{1:t}, \Ac_{1:t} = \ac_{1:t})   \\
  = \prod_{k \in \teamset} 
  \smashoperator[r]{\prod_{\substack{(\st^i,\ac^i,\st^i_{+}) \in \\
  \StSpteam{k} \times \AcSpteam{k} \times \StSpteam{k}}}}
\stprobteam{k}(\st^i_{+} \mid \st, \ac^i, \mffunc(\st_t))^{\widehat{m}^{(k)}_t(\st^i, \ac^i, \st^i_{+})},
  \label{eq:dynamics-counts}
\end{multline}
where $\widehat{m}^{(k)}_t$ denotes the realization of $\widehat M^{(k)}_t$ along the sample path $(\st^i_{1:t+1}, \ac^i_{1:t})$. Furthermore, the average cost $\costteam{k}$ (which is given by~\eqref{eq:ind-to-team-cost}) may be written as
\begin{equation}\label{eq:cost-counts}
    \costteam{k}_t = 
    \smashoperator[l]{\sum_{\substack{\st^i \in \StSpteam{k} \\ \ac^i \in \AcSpteam{k}}}}
    \costteamagent{k}(\st^i, \ac^i, \mffunc(\st_t)) \bar{m}^{(k)}_t(\st^i,\ac^i),
\end{equation}
where $\bar{m}^{(k)}_t$ denotes the realization of $\widebar{M}^{(k)}_t$ along the sample path $(s^i_{1:t}, \ac^i_{1:t})$. 
Notice that computing the right-hand-side expressions in~\eqref{eq:dynamics-counts} and~\eqref{eq:cost-counts} require only aggregate information, namely counts
$\widehat m^k_t$ and $\widebar m^k_t$ for each team $k$. We will exploit such symmetries to present a simpler and equivalent representation of Game~\ref{game:original}.  

\subsection{Dynamics of the counts}
Given any strategy $\pol = (\polteam{1}, \dots, \polteam{\numteams})$ for the
system and any realization $\mf_{1:T}$ of the mean field $\MF_{1:T}$ (or equivalently, for any realization 
$m_t$ of the state counts $M_t$ and $z_t = \mu(m_t)$), we
define the following partial functions, which we call \emph{prescriptions} following the terminology of~\citet{NayyarMahajanTeneketzis_2013}:
\begin{equation} \label{eq:prescription}
  \prescriptionteam{k}_t = \polteam{k}_t(\cdot, \mf_{1:t}), \quad \forall k \in \teamset.
\end{equation}

When the realization $\mf_{1:t}$ is given, $\prescriptionteam{k}_t$ is a function
from $\StSpteam{k}$ to $\Delta(\AcSpteam{k})$. When $\MF_{1:t}$ is a random variable,
$\polteam{k}(\cdot, \MF_{1:t})$ is a random function from $\StSpteam{k}$ to
$\Delta(\AcSpteam{k})$ and we denote this random function by $\Prescriptionteam{k}_t$.
We use $\prescription_t$ to denote $(\prescriptionteam{1}_t, \dots,
\prescriptionteam{\numteams}_t)$ and $\Prescription_t$ to denote $(\Prescriptionteam{1}_t, \dots \Prescriptionteam{\numteams}_t)$.

Now we describe the dynamics of the state counts given the current state count $m_t$ and prescription $\prescriptionteam{k}_t)$.
\subsubsection{From state-counts to state-action counts} Let $m^{(k)}_t$ and $\widebar{m}^{(k)}_t$ be consistent values of state counts and state-action counts, i.e., 
\[
m^{(k)}_t(\st) = \sum_{\ac \in \AcSpteam{k}}\widebar{m}^{(k)}_t(\st, \ac),
\quad \forall \st \in \StSpteam{k}.
\]
Then, from a basic combinatorial counting argument, we get 
\begin{multline}
\PR(\bar{M}_{t}^{(k)} = \widebar{m}_t^{(k)} \mid M_t = m_t, \Prescriptionteam{k}_t 
= \prescriptionteam{k}_t)  \\ 
= \smashoperator[l]{\prod_{\st \in \StSpteam{k}}}
\Bigg[ 
  \frac{m_t^{(k)}(\st)!}{ \prod_{\ac \in \AcSpteam{k}} \bar{m}^{(k)}_t(\st, \ac)} 
\smashoperator[r]{\prod_{\ac \in \AcSpteam{k}}} \prescriptionteam{k}_t(\ac | \st)^{\bar{m}^{(k)}_t(\st, \ac)}\Bigg],
\label{eq:m-to-bar-m}
\end{multline}
which is a product of multinomial distributions.

\subsubsection{From state-action counts to state-action-state counts}
Let $m^{(k)}_t$, $\widebar{m}^{(k)}_t$ and $\widehat{m}^{(k)}_t$ be consistent values of state counts, state-action counts and state-action-next state counts, i.e., 
\[
    \widebar{m}^{k}_t(\st, \ac) = \sum_{\st' \in \StSpteam{k}} \widehat{m}^{(k)}_t(\st, \ac, \st'), \quad \forall \st, \ac \in \StSpteam{k} \times \Acteam{k}.
\]
Let $\mf_t$ be the mean-field corresponding to $(m^{(1)}_t, \dots, m^{(K)}_t)$.
Then, from a basic combinatorial counting argument, we get 
\begin{multline}
\PR(\widehat{M}^{(k)}_t =\widehat{m}^{(k)}_t \mid \widebar{M}^{(k)}_t = \widebar{m}^{(k)}_t, M_t = m_t) \\
= \smash{\prod_{\substack{\st \in \StSpteam{k} \\ \ac \in \AcSpteam{k}}}}
\Bigg[ \frac{\widebar{m}^{(k)}_t(\st, \ac)!}{\prod_{\st'} \widehat{m}^{(k)}_t(\st, \ac, \st')!} 
 \prod_{\st' \in \StSpteam{k}} \stprobteam{k}\big(\st' \mid \st, \ac, \mf_t)^{\widehat{m}^{(k)}_t(\st, \ac, \st')}\Bigg],
\label{eq:bar-m-to-hat-m}
\end{multline}
which is also a product of multinomial distributions.

\subsubsection{From state-action-state counts to updated state counts}
Let $\widehat m^{(k)}_t$ and $m^{(k)}_{t+1}$ be consistent values of state-action-state and state counts, i.e., 
\begin{equation}
m^{(k)}_{t+1}(\st') = 
\smashoperator[l]{\sum_{\st \in \StSpteam{k}}} 
\smashoperator[r]{\sum_{\ac \in \AcSpteam{k}}}
\widehat{m}^{(k)}_t(\st, \ac, \st'),
\quad \forall \st' \in \StSpteam{k}.
\label{eq:hat-m-to-m}
\end{equation}
Thus, if we ``marginalize'' the sampled state-action-next state count $\widehat{m}^{(k)}_t$, we will obtain the state count $m^{(k)}_{t+1}$. 

\subsection{A game between virtual players}

We start by establishing that the mean-field $Z_t$ is an information state (in the sense of \citet{aisjmlr}) for every team.
\begin{proposition}\label{prop:IS}
For any strategy $\pol = (\polteam{1}, \dots, \polteam{K})$ and any time $t \in \{1,\dots, T\}$, the mean field $\{\MF_t\}_{t \ge 1}$ is an information state for every team~$k$, i.e., the following properties hold for any realization $(\mf_{1:t}, \prescription_{1:t})$ of $(\MF_{1:t}, \Prescription_{1:t})$:
\begin{enumerate}
    \item[(P1)] \textbf{Sufficient for performance evaluation:}
\begin{align}
\EXP[\costteam{k}_t  \mid \MF_{1:t} = \mf_{1:t}, \Prescription_{1:t} = \prescription_{1:t}] 
&= \EXP[\costteam{k}_t \mid \MF_t = \mf_t, \Prescriptionteam{k}_t = \prescriptionteam{k}_t] 
\notag \\
&\eqqcolon \costfunctionteam{k}_t(\mf_t, \prescriptionteam{k}_t) \label{eq:cost_team}
\end{align}
    
    \item[(P2)] \textbf{Sufficient for predicting itself:}
    for any $\mf = (\mfteam{1}, \dots \mfteam{\numteams}) \in \MFSpace$, we have
  \begin{align}\label{eq:mf-evolve}
    \PR&(\MF_{t+1} = \mf \mid \MF_{1:t} = \mf_{1:t}, \Prescription_{1:t} = \prescription_{1:t}) \notag \\
    &= 
    \prod_{k \in \teamset} 
    \PR(\MFteam{k}_{t+1} = \MFteam{k} \mid \MF_{t} = \mf_{t}, \Prescriptionteam{k}_{t} = \prescriptionteam{k}_{t}) \notag \\
    &\eqqcolon
    \prod_{k \in \teamset} 
    Q^{(k)}(\mfteam{k} \mid \mf_t, \prescriptionteam{k}_t), \notag \\
    &\eqqcolon
    Q(z \mid \mf_t, \prescription_t),
  \end{align}
  where $Q^{(k)}(\mfteam{k} \mid \mf_t, \prescriptionteam{k}_t)$ may be computed by
    combining \eqref{eq:m-to-bar-m}--\eqref{eq:hat-m-to-m}.
\end{enumerate}
\end{proposition}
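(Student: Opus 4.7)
The plan is to leverage the three count-based kernels \eqref{eq:m-to-bar-m}, \eqref{eq:bar-m-to-hat-m}, and~\eqref{eq:hat-m-to-m} to reduce both claims to a single structural observation: conditional on $M_t$ (which is equivalent to $\MF_t$ once team sizes are fixed) together with the current prescriptions $\Prescription_t$, everything that happens over the interval $[t, t{+}1]$ factorizes cleanly across teams and depends on no earlier history. A necessary preliminary is an exchangeability lemma, proved by induction on $t$ using the product form of~\eqref{eq:dynamics} and the independence of initial states: the conditional law of $\Stteam{k}_t$ given $(\MF_{1:t}, \Prescription_{1:t-1})$ is uniform over all orderings consistent with the count $M^{(k)}_t$. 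This is the workhorse that lets one collapse histories to counts.

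For property (P1), I rewrite the team cost via state–action counts as in~\eqref{eq:cost-counts}, so that $\costteam{k}_t$ is a deterministic function of $\bar M^{(k)}_t$ and $\mf_t$. Given the prescription, each agent in team $k$ samples its action independently from $\prescriptionteam{k}_t(\cdot\mid\St^i_t)$; combined with the exchangeability of $\Stteam{k}_t$, this yields that the conditional distribution of $\bar M^{(k)}_t$ given $(\MF_{1:t}, \Prescription_{1:t})$ is exactly the product of multinomials in~\eqref{eq:m-to-bar-m}, which depends only on $m^{(k)}_t$ and $\prescriptionteam{k}_t$. Since $\mf_t$ is itself a function of $M_t$, taking expectation produces a function $\costfunctionteam{k}_t(\mf_t, \prescriptionteam{k}_t)$ as claimed.

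For property (P2), I decompose the one-step mean-field transition through the intermediate counts $M_t \to \bar M_t \to \widehat M_t \to M_{t+1}$. Kernel~\eqref{eq:m-to-bar-m} is a product over teams whose team-$k$ factor depends only on $m^{(k)}_t$ and $\prescriptionteam{k}_t$; kernel~\eqref{eq:bar-m-to-hat-m} is a product over teams whose team-$k$ factor depends only on $\bar m^{(k)}_t$ and the global $\mf_t$; and~\eqref{eq:hat-m-to-m} is a deterministic marginalization that also factors across teams. Composing the three steps and summing out the intermediate $\bar m_t$ and $\widehat m_t$ produces a kernel of the form $\prod_{k} Q^{(k)}(\mfteam{k} \mid \mf_t, \prescriptionteam{k}_t)$, which is the content of~\eqref{eq:mf-evolve}; independence of these three count updates from $(\MF_{1:t-1}, \Prescription_{1:t-1})$ given $(M_t, \Prescription_t)$ supplies the Markov-in-history property.

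The main obstacle is the justification that conditioning on the full history $(\MF_{1:t}, \Prescription_{1:t})$ collapses cleanly to conditioning on $(M_t, \Prescription_t)$ when evaluating the within-team conditional laws. This rests on two inductive verifications: first, the agent-wise product form of~\eqref{eq:dynamics} is symmetric in agent labels within each team once $\MF_t$ is fixed, so exchangeability propagates across $t$; second, that same product form makes $M_{t+1}$ conditionally independent of $(\MF_{1:t-1}, \Prescription_{1:t-1})$ given $(M_t, \Ac_t)$, hence, after integrating the actions through $\Prescription_t$, given $(M_t, \Prescription_t)$. A subtle but important point is that although the team-$k$ transition~\eqref{eq:bar-m-to-hat-m} depends on the full mean-field $\mf_t$ through all teams, the conditioning variable $\mf_t$ is determined by $M_t$ and is therefore treated as a parameter rather than a source of cross-team coupling; this is what permits the product decomposition in~\eqref{eq:mf-evolve}. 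Once these structural facts are in place, the kernels~\eqref{eq:m-to-bar-m}–\eqref{eq:hat-m-to-m} furnish the explicit computation of $Q^{(k)}$ and thereby the information-state property.
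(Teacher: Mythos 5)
Your proof is correct and follows essentially the same route as the paper, whose own proof is a two-sentence citation of \eqref{eq:cost-counts} and the kernels \eqref{eq:m-to-bar-m}--\eqref{eq:hat-m-to-m} together with the bijection between $\MF_t$ and $M_t$. Your explicit exchangeability lemma and the conditional-independence verification merely make rigorous the collapse from history-conditioning to count-conditioning that the paper leaves implicit.
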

\begin{proof}
    Property (P1) follows from~\eqref{eq:cost-counts} and~\eqref{eq:m-to-bar-m}. Property (P2) follows from~\eqref{eq:m-to-bar-m}--\eqref{eq:hat-m-to-m} and the fact that there is a one-to-one relationship between the mean field $\MF_t$ and the state counts $M_t$.
\end{proof}

Following~\citet{Nayyar:game}, we consider a stochastic game between $\numteams$
virtual players. At time $t$, the state is $\MF_t = (\MFteam{1}_t, \dots,
\MFteam{\numteams}_t) \in \MFSpace$; virtual player $k \in \teamset$ observes
$\MF_t$, chooses the prescription $\prescriptionteam{k}_t: \StSpteam{k} \to \Delta(\AcSpteam{k})$,
and incurs a per-step cost $\costfunctionteam{k}(\MF_t,
\prescriptionteam{k}_t)$ given by~\eqref{eq:cost_team}. 
The initial state $\MF_1$ has a probability mass function given by:
\begin{align}
\PR(\MF_1 = \mf) &= \prod_{k \in \teamset}\PR(\MFteam{k}_1 = \mfteam{k}) \notag \\
&=\prod_{k \in \teamset}\sum_{\stteam{k} \in (\StSpteam{k})^{\agentsteam{k}}}\prod_{i \in \agentsteamset{k}}P_0^{(k)}(\st^i_1).
\end{align}
The state $\MF_t$
evolves in a controlled Markov manner according to~\eqref{eq:mf-evolve}. 

The information available to the virtual player at time~$t$ is the history of
mean-fields $\MF_{1:t}$. Virtual player~$k$ selects the prescription
according to a strategy $\corlawhistteam{k}$, i.e.,
\[
  \Prescriptionteam{k}_t = \corlawhistteam{k}(\MF_{1:t}).
\]
Let $\corlawhistteam{k} = (\corlawhistteam{k}_1, \dots, \corlawhistteam{k}_T)$ denote the strategy of virtual player $k$. Then, the total cost incurred by virtual player~$k$ is given by:
\begin{equation}
\Costfunctionteam{k}(\corlawhistteam{k}, \corlawhistteam{-k}) = \EXP\Bigl[\sum_{t=1}^T \costfunctionteam{k}_t(\MF_t, \Prescriptionteam{k})\Bigr].
\end{equation}
We are interested in the following:
\begin{gamedefn}\label{game:common}
  Given the system model described above, identify a Nash equilibrium strategy $\optcorlawhist = (\optcorlawhistteam{k})_{k \in \teamset}$, i.e.,
  $\optcorlawhistteam{k}_t \colon \MF_t \mapsto \Prescriptionteam{k}$, i.e., 
  for any other strategy $\corlawhist = (\corlawhistteam{k})_{k \in \teamset}$, we
  have
  \begin{equation} \label{eq:L}
    \Costfunctionteam{k}(\optcorlawhistteam{k}, \optcorlawhistteam{-k}) \le
    \Costfunctionteam{k}(\corlawhistteam{k}, \optcorlawhistteam{-k}),
    \quad \forall k \in \teamset.
  \end{equation}
\end{gamedefn}

\subsection{Relationship between Games~\ref{game:original}
and~\ref{game:common}}
We have the following result that connects the solutions of Game~\ref{game:original} and Game~\ref{game:common}.
\begin{theorem} \label{thm:equiv}
  Let $\pol = (\polteam{1}, \dots, \polteam{\numteams})$ be a Team-Nash equilibrium of
  Game~\ref{game:original}. Define a strategy $\corlawhist = (\corlawhistteam{1},
  \dots, \corlawhistteam{\numteams})$ for Game~\ref{game:common} as follows: for any
  $\mf_{1:t} \in {\MFSpace}^t$:
  \begin{equation} \label{eq:NE-forward}
    \corlawhistteam{k}_t(\mf_{1:t}) = \polteam{k}_t(\cdot, \mf_{1:t}).
  \end{equation}
  Then $\corlawhist$ is a Nash equilibrium for Game~\ref{game:common}.

  Conversely, let $\corlawhist = (\corlawhistteam{1}, \dots, \corlawhistteam{\numteams})$
  by any Nash equilibrium for Game~\ref{game:common}. Define a strategy $\pol = (\polteam{1},
  \dots, \polteam{k})$ for Game~\ref{game:original} as follows: for any $\st
  \in \StSpteam{k}$ and $\mf \in \MFSpace$,
  \begin{equation} \label{eq:NE-reverse}
    \polteam{k}_t(\st, \mf_{1:t}) = \corlawhistteam{k}_t(\mf_{1:t})(\st).
  \end{equation}
  Then $\pol$ is a Team-Nash equilibrium of Game~\ref{game:original}.
\end{theorem}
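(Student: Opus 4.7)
The plan is to prove this by combining two ingredients: a bijection between the (symmetric) strategy spaces of Games~\ref{game:original} and~\ref{game:common}, and an equality of expected total costs for corresponding strategy profiles. Once both are in hand, an equilibrium in one game immediately transfers to an equilibrium in the other by pulling back any profitable deviation through the bijection.

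First I would observe that \eqref{eq:NE-forward} and \eqref{eq:NE-reverse} are mutually inverse bijections between the set of admissible profiles $\pol$ of Game~\ref{game:original} (symmetric within each team, independently randomizing) and the set of profiles $\corlawhist$ of Game~\ref{game:common}: currying $\polteam{k}_t\colon \StSpteam{k}\times{\MFSpace}^t\to\Delta(\AcSpteam{k})$ over the $\mf_{1:t}$-argument produces exactly a map $\corlawhistteam{k}_t\colon{\MFSpace}^t\to\Prescriptionteam{k}$, and uncurrying reverses this. In particular, for every profile $\pol$ in Game~\ref{game:original}, the induced prescription process $\Prescriptionteam{k}_t=\polteam{k}_t(\cdot,\MF_{1:t})$ coincides in law with the process $\Prescriptionteam{k}_t=\corlawhistteam{k}_t(\MF_{1:t})$ used by virtual player~$k$ in Game~\ref{game:common}.

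Next I would show that under any pair of corresponding profiles $(\pol,\corlawhist)$, the joint law of $(\MF_{1:T},\Prescription_{1:T})$ is the same in both games, so that
\begin{equation*}
  \perfteam{k}(\polteam{k},\polteam{-k}) = \Costfunctionteam{k}(\corlawhistteam{k},\corlawhistteam{-k}),
  \quad \forall k\in\teamset.
\end{equation*}
The initial law of $\MF_1$ agrees in both games by construction, and Property~(P2) of Proposition~\ref{prop:IS} shows that in Game~\ref{game:original} the conditional law of $\MF_{t+1}$ given $(\MF_{1:t},\Prescription_{1:t})$ is the transition kernel $Q(\,\cdot\mid\mf_t,\prescription_t)$, which is exactly the controlled Markov kernel driving Game~\ref{game:common}. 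Thus by induction the laws of $(\MF_{1:t},\Prescription_{1:t})$ agree for every $t$. Property~(P1) then gives $\EXP[\costteam{k}_t]=\EXP[\costfunctionteam{k}_t(\MF_t,\Prescriptionteam{k}_t)]$ via the tower property, and summing over $t$ yields the claimed cost identity.

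Finally, I would run the deviation argument in both directions. If $\pol$ is a Team-Nash equilibrium of Game~\ref{game:original} but the associated $\corlawhist$ fails to be a Nash equilibrium of Game~\ref{game:common}, then some virtual player $k$ admits a deviation $\tilde{\corlawhist}^{(k)}$ with strictly lower cost; pulling it back through \eqref{eq:NE-reverse} produces a symmetric team strategy $\tilde{\polteam{k}}$ with $\perfteam{k}(\tilde{\polteam{k}},\polteam{-k})<\perfteam{k}(\polteam{k},\polteam{-k})$ by the cost identity, contradicting the Team-Nash property. The converse direction is entirely symmetric, using \eqref{eq:NE-forward} to push a team deviation forward into Game~\ref{game:common}. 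The main conceptual step (and where Proposition~\ref{prop:IS} does the real work) is the cost-identity argument: without (P2) one could not argue that a deviation by team~$k$ in Game~\ref{game:original} induces the \emph{same} distribution on $(\MF_{1:T},\Prescription_{1:T})$ as the corresponding deviation by virtual player~$k$ in Game~\ref{game:common}, since the underlying state $\St_t$ of the agents is richer than the mean-field $\MF_t$; the information-state property is what collapses this extra structure.
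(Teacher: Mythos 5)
Your proposal is correct and follows essentially the same route as the paper: the paper's Appendix~\ref{app:proof1} proves exactly your two ingredients as Lemmas~\ref{lem:pf-1-lem-1}--\ref{lem:pf-1-lem-4} (the inductive identification of the law of $(\MF_{1:T},\Prescription_{1:T})$ via Property~(P2), and the cost identity $\perfteam{k}(\pol)=\Costfunctionteam{k}(\corlawhist)$ via Property~(P1)), and then runs the same pull-back-the-deviation contradiction argument in both directions. Your framing of \eqref{eq:NE-forward} and \eqref{eq:NE-reverse} as mutually inverse currying/uncurrying bijections, and your remark that the cost identity must hold for arbitrary (not just equilibrium) profiles so that deviations transfer, are both accurate and consistent with how the paper's lemmas are stated.
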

See Appendix~\ref{app:proof1}
\subsection{Markov perfect equilibrium for Game~\ref{game:common}}

Game~\ref{game:common} among virtual players is a game with perfect
information since all players choose prescriptions based on the history
$\MF_{1:t}$ of mean-field which is common knowledge between the players.
Proposition~\ref{prop:IS} implies that we can view the current
mean field $\MF_t$ as the ``state'' of the system.
Following~\citet{maskin1988theory,maskin1988theory2}, we restrict our attention to the
Markov perfect equilibrium for Game~\ref{game:common}, which can be thought of
as a subgame perfect equilibrium of Game~\ref{game:common} where all virtual
players are playing Markov strategies which map current state to
prescriptions. Such a Markov perfect equilibrium can be obtained using dynamic
programming as follows~\citep{maskin1988theory,maskin1988theory2}. 
\begin{theorem} \label{thm:MPE}
  Consider a strategy profile $\corlaw = (\corlawteam{k})_{k \in \teamset}$,
  where each virtual player is playing a Markov strategy, i.e., 
  $\corlawteam{k}_t \colon \MF_t \mapsto \Prescriptionteam{k}_t$.

  A necessary and sufficient condition for $\corlaw$ to be a Markov perfect
  equilibrium for Game~\ref{game:common} is that it satisfy the following
  conditions:
  \begin{enumerate}
    \item For each possible realization $\mf_T$ of $\MF_T$, define the value
      function for virtual player~$k$:
      \begin{equation} \label{eq:DP:T}
        V^{(k)}_T(\mf_T) = \min_{\prescriptionteam{k}_T} 
        \costfunctionteam{k}_T(\mf_T, \prescriptionteam{k}_T).
      \end{equation}
      Then, $\corlawteam{k}_T(\mf_T)$ must be a minimizing
      $\prescriptionteam{k}_T$ in~\eqref{eq:DP:T}.

    \item For $t \in \{T-1, \dots, 1\}$ and for each possible realization
      $\mf_t$ of $\MF_t$, recursively define the value function for virtual
      player~$k$:
      \begin{align} 
        V^{(k)}_t&(\mf_t) = 
        \smash{\smashoperator[l]{\sum_{i \in \agentsteam{k}}}}
        \EXP\big[ \costfunctionteam{k}_t(\mf_t, \prescriptionteam{k}_t) 
        + V^{(k)}_{t+1}(\MF_{t+1}) \bigm| \mathcal{F}^{(k)} \big]
      \label{eq:DP:t}
      \end{align}
      where the event $\mathcal{F}^{(k)} = \{\MF_t = \mf_t$,
        $\Prescriptionteam{k}_t = \prescriptionteam{k}_t$,
      $\Prescriptionteam{-k}_t = \corlawteam{-k}_t(\mf_t) \}$ and 
      the expectation is with respect to the
      distribution~\eqref{eq:mf-evolve}. Then, $\corlawteam{k}_t(\mf_t)$ must
      be a minimizing $\prescriptionteam{k}_t$ in~\eqref{eq:DP:t}.
  \end{enumerate}
\end{theorem}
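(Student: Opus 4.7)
The plan is to prove both directions by backward induction on $t$, exploiting two structural facts established earlier: by Proposition~\ref{prop:IS}, $\{\MF_t\}_{t\ge 1}$ is a controlled Markov chain whose transition kernel depends on past history only through the current pair $(\MF_t,\Prescription_t)$; and Game~\ref{game:common} has perfect information, so each $(t,\mf_t)$ defines a well-posed subgame. Subgame perfection restricted to Markov profiles, combined with the one-shot deviation principle for finite-horizon games, is the natural route to the dynamic-programming characterization.

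For the sufficient direction, I fix team $k$ and an arbitrary (possibly history-dependent) deviation $\corlawhistteam{k}$ played against $\corlawteam{-k}$. Define the deviation cost-to-go
\begin{equation*}
W^{(k)}_t(\mf_{1:t}) \coloneqq \EXP^{(\corlawhistteam{k},\corlawteam{-k})}\!\Bigl[\,\sum_{\tau=t}^T \costfunctionteam{k}_\tau(\MF_\tau,\Prescriptionteam{k}_\tau)\,\Bigm|\,\MF_{1:t}=\mf_{1:t}\Bigr].
\end{equation*}
I would show by backward induction that $W^{(k)}_t(\mf_{1:t}) \ge V^{(k)}_t(\mf_t)$ for every $t$. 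The base case $t=T$ is immediate from~\eqref{eq:DP:T}. For the inductive step, the tower property together with the Markov transition~\eqref{eq:mf-evolve} gives
\begin{equation*}
W^{(k)}_t(\mf_{1:t}) = \EXP\!\bigl[\costfunctionteam{k}_t(\mf_t,\Prescriptionteam{k}_t) + W^{(k)}_{t+1}(\MF_{1:t+1}) \bigm| \MF_{1:t}=\mf_{1:t}\bigr];
\end{equation*}
substituting the inductive hypothesis $W^{(k)}_{t+1}\ge V^{(k)}_{t+1}$ and then conditioning on the (possibly randomized) choice $\Prescriptionteam{k}_t = \corlawhistteam{k}_t(\mf_{1:t})$ reduces the right-hand side to a mixture of quantities of the form appearing inside~\eqref{eq:DP:t} evaluated at realizations of $\prescriptionteam{k}_t$; each such term is $\ge V^{(k)}_t(\mf_t)$, so the mixture is as well. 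Setting $t=1$ and averaging over $\MF_1$ yields $\Costfunctionteam{k}(\corlawhistteam{k},\corlawteam{-k}) \ge \Costfunctionteam{k}(\corlawteam{k},\corlawteam{-k})$, which is~\eqref{eq:L}.

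For the necessary direction, I argue by contradiction: suppose $\corlaw$ is MPE but the DP conditions fail at some $(t,\mf_t)$ for team $k$. Because $\mf_t$ is perfectly observed by all virtual players, the continuation from $(t,\mf_t)$ is a genuine subgame. Applying the sufficient direction to times $t+1,\dots,T$ shows that if team $k$ plays $\corlawteam{k}_{t+1:T}$ from time $t+1$ onward, its continuation value starting at $\MF_{t+1}$ equals $V^{(k)}_{t+1}(\MF_{t+1})$. Hence the expected cost of the one-stage deviation that plays a minimizer $\prescriptionteam{k,\star}_t$ of~\eqref{eq:DP:t} at time $t$ and follows $\corlawteam{k}_{t+1:T}$ afterward equals $V^{(k)}_t(\mf_t)$, strictly below the cost of playing $\corlawteam{k}_t(\mf_t)$ (by the failure assumption). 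This contradicts the Nash property of $\corlaw$ in the subgame.

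The main obstacle is the mismatch between the Markov form required of the equilibrium profile and the fully general (history-dependent, randomized) class of unilateral deviations allowed in~\eqref{eq:L}. The backward induction above is precisely the one-shot deviation principle that closes this gap: it shows that against a Markov profile, any history-dependent best response is dominated by a Markov one, so optimization reduces to the pointwise Bellman minimizations~\eqref{eq:DP:T}--\eqref{eq:DP:t}. A minor technicality is the existence of the minimizers, which follows from compactness of the prescription space $\Delta(\AcSpteam{k})^{\StSpteam{k}}$ (since $\StSpteam{k}$ and $\AcSpteam{k}$ are finite) together with continuity of $\costfunctionteam{k}_t$ and of $Q^{(k)}(\cdot\mid\mf_t,\cdot)$ in $\prescriptionteam{k}_t$ inherited from~\eqref{eq:m-to-bar-m}--\eqref{eq:hat-m-to-m}.
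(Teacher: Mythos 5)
The paper does not actually prove Theorem~\ref{thm:MPE}: it asserts it as an instance of the standard dynamic-programming characterization of Markov perfect equilibria in finite-horizon perfect-information stochastic games, citing \citet{maskin1988theory,maskin1988theory2}, with Proposition~\ref{prop:IS} supplying the controlled-Markov structure of $\{\MF_t\}$ that makes the citation applicable. Your backward-induction/one-shot-deviation argument is precisely the standard proof behind that citation, and it is essentially correct: the sufficiency direction correctly verifies the inequality $W^{(k)}_t(\mf_{1:t})\ge V^{(k)}_t(\mf_t)$ against arbitrary history-dependent randomized deviations (using (P1) to replace the realized cost by $\costfunctionteam{k}_t(\mf_t,\prescriptionteam{k}_t)$ and (P2) to factor the transition through $(\mf_t,\prescription_t)$), and since the bound holds at every $(t,\mf_{1:t})$ it delivers the Nash property in every subgame, not just at $t=1$. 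What your self-contained proof buys over the paper's citation is an explicit verification that the argument survives the two nonstandard features here: deviations may be randomized over prescriptions, and the per-step cost of player $k$ does not depend on $\prescriptionteam{-k}_t$.

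Two points to tighten. First, as displayed, \eqref{eq:DP:t} contains a spurious $\sum_{i\in\agentsteam{k}}$ and omits the $\min_{\prescriptionteam{k}_t}$; you correctly read the intended Bellman minimization (the surrounding text confirms it), but you should state that reading explicitly. Second, in the necessity direction you assert that following $\corlawteam{k}_{t+1:T}$ from time $t+1$ yields continuation cost exactly $V^{(k)}_{t+1}(\MF_{t+1})$; this holds only if the DP conditions are satisfied at all times after $t$, which is not guaranteed under your bare hypothesis that they ``fail at some $(t,\mf_t)$.'' The standard fix is to let $t$ be the \emph{largest} time at which the minimization condition fails at some reachable-in-the-subgame state $\mf_t$; then the conditions hold for all $\tau>t$, the equality version of your sufficiency induction gives the continuation value $V^{(k)}_{t+1}$, and the one-stage deviation at $(t,\mf_t)$ strictly improves, contradicting subgame-Nash at that subgame. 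With that adjustment the proof is complete.
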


\begin{remark}
Theorem~\ref{thm:MPE} states that the Markov perfect equilibrium for the
virtual players can be obtained by dynamic programming. Let $\corlaw$ be such
a Markov perfect equilibrium. Let $\pol$ be the policy obtained
by~\eqref{eq:NE-reverse}. Then, by Theorem~\ref{thm:equiv}, $\pol$ is a Team-Nash
equilibrium of Game~\ref{game:original}, which we call \emph{mean-field
based Markov perfect equilibrium} (MF-MPE).
\end{remark}

\begin{remark}
  In general, solving the dynamic program of Theorem~\ref{thm:MPE} suffers from the curse of dimensionality. The space $\MFteamSpace{k}$ has at most $(\agentsteam{k} + 1)^{|\StSpteam{k}|}$ elements, which increase polynomially with the number $\agentsteam{k}$ of agents. Another added challenge is that it is complicated to explicitly construct the conditional distribution $Q^{(k)}$ used in property (P2) of Proposition~\ref{prop:IS}. An approach to bypass both difficulties is to use sampling-based reinforcement learning techniques~\cite{chang2007simulation} which do not explicitly construct the action-value function. Sampling based techniques are particularly efficient in our setting because the condition distribution $Q^{(k)}$ are defined via a series of products of multinomial distributions \eqref{eq:m-to-bar-m} and \eqref{eq:bar-m-to-hat-m}, so we can efficiently sample from $Q^{(k)}$ without constructing it explicitly.
\end{remark}

\begin{remark}
  Sampling from the multinomial distribution is still computationally involved when the number $\agentsteam{(k)}$ of agents in each team is large. In such a setting, one option is to approximate the dynamics $Q^{(k)}$ by using the multivariate Gaussian approximation to the multinomial distribution. 
\end{remark}
\section{Mean-field approximation}\label{sec:mf}

In this section, we consider the setting in which each team has a large number of
players. We approximate the system by an infinite population mean field limit
and show that any MPE of the infinite population game is an
$\varepsilon$-Team-Nash equilibrium of the original game, where $\varepsilon =
(\varepsilon_{k})_{k \in \teamset}$ and $\varepsilon_k =
\mathcal{O}(1/\sqrt{\agentsteam{k}})$. 

The infinite population approximation provides a drastic simplification of the
dynamic program of Theorem~\ref{thm:MPE} because in the infinite population
mean-field system, the mean-field is equivalent to the statistical
distribution of the agents and, therefore, evolves in a deterministic manner. 
\subsection{Regularity conditions on the system} 

When there are $\agentsteam{k}$ agents in team~$k$, the mean-field
$\MFteam{k}_t$ takes values in $\MFteamSpace{k}$, which is the set of all
probability mass functions (PMFs) $\mfteam{k}$ on $\StSpteam{k}$ such that
$\agentsteam{k} \mfteam{k}$ is a vector of non-negative integers. Note that
$\MFteamSpace{k} \subset \Delta(\StSpteam{k})$, the set of all PMFs on
$\StSpteam{k}$. To understand the behavior of the finite population system as
the number of agents become large, we first extend the domain of $\mfteam{k}$
in the cost function and the transition functions from $\MFteamSpace{k}$ to
$\Delta(\StSpteam{k})$. In particular, we let $\SMFSpace$ denote the set
$\prod_{k \in \teamset} \Delta(\StSpteam{k})$ and assume that the per-step
cost function $\costteamagent{k}_t$ is a function from $\StSpteam{k} \times
\AcSpteam{k} \times \SMFSpace$ to $\reals$ and the controlled transition
matrix $\stprobteam{k}$ is a transition matrix from $\StSpteam{k} \times
\AcSpteam{k} \times \SMFSpace$ to $\StSpteam{k}$. 

We start with some formal definitions needed to define the Lipschitz continuity of
$\costteamagent{k}_t$ and $\stprobteam{k}$. 
Let $d^{(k)}_{\st}$ be a metric on the state space $\StSpteam{k}$, $k \in \teamset$. Then, based on this metric, let $d_{\mathfrak{w}}^{(k)}$ be the Kantorovich metric (also called Wasserstein metric) on $\Delta(\StSpteam{k})$ (i.e., the space of mean-fields for each team) $k \in \teamset$. Define a metric $\W$ on the set of mean-fields for all teams $\SMFSpace$ as:
\begin{equation} \label{eq:z-metric}
    \W(\mf, \hat{\mf}) = \sum_{k \in \teamset}\Bigl(d_{\mathfrak{w}}^{(k)}(\mfteam{k}, \hat \mf^{(k)})\Bigr),
    \quad
    \mf, \hat \mf \in \SMFSpace.
\end{equation}
\subsection{Infinite population mean-field approximation}

We now consider an infinite population approximation of Game~\ref{game:common}, where we approximate the per-step cost $\costfunctionteam{k}_t$ and the dynamics $Q^{(k)}$, defined in Proposition~\ref{prop:IS} by $\scostfunctionteam{k}$ and $\bar Q^{(k)}$ defined as follows:
\begin{align}
  \scostfunctionteam{k}_t(\smf, \prescriptionteam{k}) &= \sum_{\st \in \StSpteam{k}}\smfteam{k}(\st)\costteamagent{k}_t(\st, \prescriptionteam{k}(\st), \smf), \label{eq:smf-cost-team}
  \\
  \bar Q^{(k)}(\smfteam{k} \mid \smf, \prescriptionteam{k}) &=
  \IND\{ \smfteam{k} = \bar q^{(k)}(\smf, \prescriptionteam{k)})\}, \label{eq:smf-transition-team}
\shortintertext{where}
  \bar q^{(k)}(\smf, \prescriptionteam{k)}) &=
  \sum_{\st \in \StSpteam{k}}\smfteam{k}_t(\st)P^{(k)}(\st'|\st, \prescriptionteam{k}_t(\st), \smf_t),
\label{eq:mf-evolve-inf}
\end{align}
and under the mean-field dynamics
\begin{equation}\label{eq:mf-evolve-inf-total}
\PR(\MF_{t+1}|\MF_t = \mf_t, \Prescription_t = \prescription_t) \eqqcolon \bar q(\smf_t, \prescription_t) = 
  \prod_{k \in \teamset}\bar q^{(k)}(\smf, \prescriptionteam{k)}).
\end{equation}
\begin{lemma} \label{lem:ais-inf}
   The infinite population game is an approximation of the finite population game in the following sense. 
   \begin{enumerate}
       \item For any $k \in \teamset$, we have
       \[
          \bigl| \costfunctionteam{k}_t(\mf, \prescriptionteam{k}) - \scostfunctionteam{k}_t(\mf, \prescriptionteam{k})  \bigr| = 0, \forall \mf \in \MFSpace, \prescriptionteam{k}
       \]
       \item For any $k \in \teamset$, we have
       \[
          \W(\bar q(\mf_t, \prescription_t), Q(\cdot \mid \mf_t, \prescription_t)) {\le}\sum_{k \in \teamset}\frac{\kappa}{\sqrt{\agentsteam{k}}}  =: \delta_t
        \]
        where $\kappa$ is a constant that depends on the state spaces $\StSpteam{k}$ and the metric $d_\st$.
   \end{enumerate}
\end{lemma}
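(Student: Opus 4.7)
\medskip\noindent\textbf{Proof proposal for Lemma~\ref{lem:ais-inf}.} The plan is that part~(1) follows by an exact unconditioning computation, while part~(2) reduces to finite-state concentration of an empirical distribution about its conditional mean.

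For part~(1), unfold the definition $\costfunctionteam{k}_t(\mf_t,\prescriptionteam{k}_t)=\EXP[\costteam{k}_t\mid\MF_t=\mf_t,\Prescriptionteam{k}_t=\prescriptionteam{k}_t]$ using \eqref{eq:ind-to-team-cost} and condition further on $\Stteam{k}_t$. Since the actions are independently randomized with $\Ac^i_t\sim\prescriptionteam{k}_t(\cdot\mid\St^i_t)$, integrating them out gives
\[
 \EXP\bigl[\costteam{k}_t\bigm|\Stteam{k}_t,\mf_t,\prescriptionteam{k}_t\bigr] = \frac{1}{\agentsteam{k}}\sum_{i\in\agentsteamset{k}}\sum_{\ac\in\AcSpteam{k}}\prescriptionteam{k}_t(\ac\mid\St^i_t)\,\costteamagent{k}_t(\St^i_t,\ac,\mf_t).
\]
Because $\mf_t\in\MFSpace$, the team-$k$ states necessarily have empirical distribution equal to $\mfteam{k}_t$ on every realization, so $\tfrac{1}{\agentsteam{k}}\sum_i f(\St^i_t)=\sum_\st\mfteam{k}_t(\st)f(\st)$ for any $f$. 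Applying this with $f(\st)=\sum_\ac\prescriptionteam{k}_t(\ac\mid\st)\costteamagent{k}_t(\st,\ac,\mf_t)$ and taking outer expectation exhibits $\costfunctionteam{k}_t(\mf_t,\prescriptionteam{k}_t)=\scostfunctionteam{k}_t(\mf_t,\prescriptionteam{k}_t)$ of \eqref{eq:smf-cost-team}; note that the identity is exact, not an $O(1/\sqrt{\agentsteam{k}})$ bound.

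For part~(2), I would first identify $\bar q^{(k)}(\mf_t,\prescriptionteam{k}_t)$ with the conditional mean of $\MFteam{k}_{t+1}$. Writing $\MFteam{k}_{t+1}(\st')=\tfrac{1}{\agentsteam{k}}\sum_{i\in\agentsteamset{k}}\IND\{\St^i_{t+1}=\st'\}$ and using \eqref{eq:m-to-bar-m}--\eqref{eq:hat-m-to-m}, the conditional distribution of $\St^i_{t+1}$ is $\sum_\ac\prescriptionteam{k}_t(\ac\mid\St^i_t)\stprobteam{k}(\cdot\mid\St^i_t,\ac,\mf_t)$, and averaging via the empirical distribution $\mfteam{k}_t$ of the parents reproduces the right-hand side of \eqref{eq:mf-evolve-inf}. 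Since $\bar q(\mf_t,\prescription_t)$ is a point while $Q(\cdot\mid\mf_t,\prescription_t)$ is a measure on $\SMFSpace$, the $1$-Wasserstein distance equals the expected metric to that point: $\W(\bar q,Q)=\EXP[\W(\bar q(\mf_t,\prescription_t),\MF_{t+1})]$. The additive form of \eqref{eq:z-metric} then splits this over teams,
\[
 \W(\bar q,Q) = \sum_{k\in\teamset}\EXP\bigl[d_{\mathfrak{w}}^{(k)}\bigl(\bar q^{(k)}(\mf_t,\prescriptionteam{k}_t),\,\MFteam{k}_{t+1}\bigr)\bigr].
\]

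To bound each summand, I would invoke the standard Wasserstein-to-$\ell^1$ inequality $d_{\mathfrak{w}}^{(k)}(\nu_1,\nu_2)\le\tfrac12 D^{(k)}\sum_{\st'}|\nu_1(\st')-\nu_2(\st')|$, where $D^{(k)}=\max_{\st,\st'}d^{(k)}_{\st}(\st,\st')$ is the diameter of $\StSpteam{k}$. By construction the indicators $\IND\{\St^i_{t+1}=\st'\}$ are conditionally independent across $i\in\agentsteamset{k}$ given $(\MF_t,\Prescription_t)$, with Bernoulli variance at most $1/4$; hence $\mathrm{Var}\bigl(\MFteam{k}_{t+1}(\st')\bigr)\le 1/(4\agentsteam{k})$, and Jensen gives $\EXP|\MFteam{k}_{t+1}(\st')-\bar q^{(k)}(\st')|\le 1/(2\sqrt{\agentsteam{k}})$. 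Summing over $\st'\in\StSpteam{k}$ and then over $k\in\teamset$ yields the announced bound with $\kappa=\max_k D^{(k)}|\StSpteam{k}|/4$. The main obstacle I anticipate is only bookkeeping: pinning down $\kappa$ so that it depends on $(\StSpteam{k},d^{(k)}_{\st})$ alone and is uniform in $(\mf_t,\prescription_t)$; no deeper analytic input beyond concentration of finite-state empirical distributions is required.
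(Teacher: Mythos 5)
Your proposal is correct in substance and, for both parts, supplies the computations that the paper delegates to one-line remarks or citations. Part~(1) is the same argument the paper gestures at: the equality is exact because, conditional on $\MF_t=\mf_t$, the empirical average $\tfrac{1}{\agentsteam{k}}\sum_i f(\St^i_t)$ is deterministic and equals $\sum_{\st}\mfteam{k}_t(\st)f(\st)$. For part~(2) you take a genuinely more elementary route. The paper first splits the product-form distance over teams by invoking a cited tensorization lemma and then bounds each per-team term by citing a general Wasserstein concentration result for empirical measures. You instead observe that $\bar q$ is a point mass, so $\W(\bar q,Q)=\EXP[\W(\bar q,\MF_{t+1})]$, which splits over teams for free by the additive definition~\eqref{eq:z-metric}, and you then prove the per-team concentration by hand via the $\ell^1$-domination of $d_{\mathfrak{w}}^{(k)}$ and a variance computation. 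This buys a self-contained proof with an explicit constant $\kappa$, and it is arguably a better fit than the cited concentration result, which is stated for i.i.d.\ samples whereas here the $\St^i_{t+1}$ are neither identically distributed nor unconditionally independent.

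One step needs repair: the indicators $\IND\{\St^i_{t+1}=\st'\}$ are \emph{not} conditionally independent across $i$ given only $(\MF_t,\Prescription_t)$. They are conditionally independent given the full profile $(\St_t,\Ac_t)$ by~\eqref{eq:dynamics}, but conditioning only on the mean field leaves the assignment of states to agents random and correlated (sampling without replacement from the multiset determined by $\mf_t$). The variance bound survives with at most a constant-factor change via the law of total variance:
\begin{align*}
\mathrm{Var}\bigl(\MFteam{k}_{t+1}(\st')\bigr) = \EXP\Bigl[\mathrm{Var}\bigl(\MFteam{k}_{t+1}(\st')\bigm| \St_t,\Ac_t\bigr)\Bigr] + \mathrm{Var}\Bigl(\tfrac{1}{\agentsteam{k}}\textstyle\sum_{i} \stprobteam{k}(\st'\mid\St^i_t,\Ac^i_t,\mf_t)\Bigr),
\end{align*}
where the first term is at most $1/(4\agentsteam{k})$ by conditional independence given $(\St_t,\Ac_t)$, and the second is also $O(1/\agentsteam{k})$ because, given $\mf_t$, the number of agents in each state is fixed and the actions are conditionally independent given the states, so the inner sum is again an average of $\agentsteam{k}$ conditionally independent $[0,1]$-valued terms. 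With that correction your $\kappa$ changes only by a constant and the stated $O(1/\sqrt{\agentsteam{k}})$ rate stands.
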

\begin{proof}
The first part of the lemma follows from the definitions of $\costfunctionteam{k}$ and $\scostfunctionteam{k}$. For the second point, we first note that from 
~\cite[Lemma 4]{sinha2023sensitivity} we have:
\begin{equation*}
    \W(\bar q(\mf_t, \prescription_t), Q(\cdot \mid \mf_t, \prescription_t)) \le \sum_{k \in \teamset} \W(Q^{(k)}(\cdot \mid \mf_t, \prescriptionteam{k}_t), \bar q^{(k)}(\mf_t, \prescriptionteam{k}_t),
\end{equation*}
Furthermore,  concentration of empirical measure to statistical measure with respect to the Wasserstein distance ~\citep{Sommerfeld:2019} implies that
\begin{equation*}
    \W(\bar q^{(k)}(\mf_t, \prescriptionteam{k}_t, Q^{(k)}(\cdot \mid \mf_t, \prescriptionteam{k}_t)) \le \frac{\kappa}{\sqrt{\agentsteam{k}}},
\end{equation*}
where $\kappa$ is a constant that depends on the state spaces $\StSpteam{k}$ and the metric $d_\st$. Combining the above two equations implies the second result.
\end{proof}
Since the infinite population approximation is a Markov game, its
Markov perfect equilibrium is characterized as follows~\citep{maskin1988theory,maskin1988theory2}.
\begin{theorem} \label{thm:MPE-inf}
  Consider a strategy profile $\scorlaw = (\scorlawteam{k})_{k \in \teamset}$,
  where each virtual player is playing a Markov strategy. 

  A necessary and sufficient condition for $\scorlaw$ to be a Markov perfect
  equilibrium for the mean-field limit of Game~\ref{game:common} is that it satisfy the following conditions:
  \begin{enumerate}
    \item For each possible realization $\smf_T$ of $\SMF_T$, define the value
      function for virtual player~$k$:
      \begin{equation} \label{eq:DP:T-inf}
        \widebar{V}^{(k)}_T(\smf_T) = \min_{\prescriptionteam{k}_T} 
        \scostfunctionteam{k}_T(\smf_T, \prescriptionteam{k}_T).
      \end{equation}
      Then, $\scorlawteam{k}_T(\smf_T)$ must be a minimizing
      $\prescriptionteam{k}_T$ in~\eqref{eq:DP:T-inf}.

    \item For $t \in \{T-1, \dots, 1\}$ and for each possible realization
      $\smf_t$ of $\SMF_t$, recursively define the value function for virtual
      player~$k$:
      \begin{align} \label{eq:DP:t-inf}
        \widebar{V}^{(k)}_t(\smf_t) &= \min_{\prescriptionteam{k}_t}
        \Big\{ \scostfunctionteam{k}_t(\smf_t, \prescriptionteam{k}_t)  
            + \widebar V^{(k)}_{t+1} \bigl( (q^{(k)}(\smf_t, \prescriptionteam{k}_t) )_{k \in \mathcal K} \bigr)
         \Bigr\}.
      \end{align}
      Then $\scorlawteam{k}_t(\smf_t)$ must
      be a minimizing $\prescriptionteam{k}_t$ in~\eqref{eq:DP:t-inf}.
  \end{enumerate}
\end{theorem}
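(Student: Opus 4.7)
The plan is to prove both directions by backward induction on the time horizon, applying the standard one-shot deviation principle for Markov perfect equilibria in finite-horizon dynamic games with perfect state observation, as in \citet{maskin1988theory,maskin1988theory2}. The key structural simplification in the infinite-population limit is that, by \eqref{eq:smf-transition-team}, the transition kernel $\bar Q^{(k)}$ is the Dirac mass at $\bar q^{(k)}(\smf_t, \prescriptionteam{k}_t)$, so the successor state $\SMF_{t+1}$ is a deterministic function of $(\SMF_t, \Prescription_t)$. Hence the continuation term $\widebar V^{(k)}_{t+1}$ in \eqref{eq:DP:t-inf} carries no expectation, and the recursion reduces to a sequence of static optimizations, each admitting a minimizer because $\prescriptionteam{k}_t$ ranges over the compact finite-dimensional set $\Delta(\AcSpteam{k})^{\StSpteam{k}}$ and the objective is continuous in it.

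For sufficiency, I would fix a profile $\scorlaw$ satisfying \eqref{eq:DP:T-inf}--\eqref{eq:DP:t-inf} and consider any alternative Markov strategy for team $k$, denoted $\tilde{\scorlaw}^{(k)}$. Let $\widebar W^{(k)}_t(\smf_t)$ denote the cost-to-go for team $k$ under the profile $(\tilde{\scorlaw}^{(k)}, \scorlawteam{-k})$ starting at state $\smf_t$ at time $t$. The key inequality $\widebar W^{(k)}_t(\smf_t) \ge \widebar V^{(k)}_t(\smf_t)$ can be established by downward induction on $t$: the base case $t=T$ is immediate from \eqref{eq:DP:T-inf}, while the inductive step expands $\widebar W^{(k)}_t(\smf_t)$ as the one-step cost at $\smf_t$ under $\tilde{\scorlaw}^{(k)}_t(\smf_t)$ plus $\widebar W^{(k)}_{t+1}$ evaluated at the deterministic successor induced by $(\tilde{\scorlaw}^{(k)}_t(\smf_t), \scorlawteam{-k}_t(\smf_t))$, applies the induction hypothesis to replace $\widebar W^{(k)}_{t+1}$ by $\widebar V^{(k)}_{t+1}$, and then invokes the Bellman inequality in \eqref{eq:DP:t-inf}. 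Setting $t=1$ yields the desired equilibrium inequality \eqref{eq:L}, with equality when $\tilde{\scorlaw}^{(k)}=\scorlawteam{k}$.

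For necessity, assume $\scorlaw$ is a Markov perfect equilibrium. By Markov perfection, no player can profitably deviate in any subgame, including subgames starting at an arbitrary state $\smf_t$ at time $t$, not merely along the equilibrium path. A one-shot deviation by player $k$ at time $T$ and arbitrary $\smf_T$ must be non-improving, so $\scorlawteam{k}_T(\smf_T)$ must attain the minimum in \eqref{eq:DP:T-inf}; a backward induction on $t$ then propagates the condition, using the induction hypothesis that the equilibrium continuation value from any state $\smf_{t+1}$ coincides with $\widebar V^{(k)}_{t+1}(\smf_{t+1})$. The main point to verify, more bookkeeping than genuine obstacle, is the soundness of the one-shot deviation principle in this finite-horizon, finite-state, finite-action setting with randomized Markov prescriptions; this is standard, since any profitable multi-step deviation can be decomposed into a sequence of one-shot deviations, at least one of which must itself be strictly profitable against the equilibrium continuation.
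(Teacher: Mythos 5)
Your argument is correct and is exactly the standard backward-induction / one-shot-deviation characterization of Markov perfect equilibrium that the paper itself invokes by citing \citet{maskin1988theory,maskin1988theory2} without writing out a proof. The only cosmetic remark is that sufficiency should be checked against arbitrary (history-dependent) deviations as in \eqref{eq:L}, not only Markov ones; your cost-to-go induction extends verbatim because the best-response problem of virtual player $k$ against a Markov opponent profile is a deterministic Markov decision problem in the state $\smf_t$.
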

\subsection{$\epsilon$-Team-Nash equilibrium}
Now we address the fundamental question of the mean-field approximation: is the infinite poulation approximation good and, if so, in what sense? We show that the any MPE of the infinite population limit is an approximate MPE of the original finite population game, where the approximation error scales as $\mathcal{O}(1/N)$.

\begin{theorem}\label{thm:approxMPEInf}
    Suppose there exists an MPE $\bar \psi$ for the infinite population game such that the corresponding total costs $\widebar L^{(k)}_t$ are $\mathcal{L}^{(k)}_t$-Lipschitz, $k \in \teamset$, $t \in \text{time}$. Then, $\bar \psi$ is an $\mathcal{O}(1/\sqrt{N})$-approximate MPE of the finite population game, where $N = \inf N^{(k)}$. In particular, for any other strategy $\corlawteam{k}$ of team~$k$, we have
    \[
        \widebar L^{(k)}(\scorlawteam{k}, \scorlawteam{-k}) \le \widebar L^{(k)}(\corlawteam{k}, \scorlawteam{-k}) + 2\sum_{t=1}^T \sum_{k \in \teamset} \frac{\kappa^{(k)} \mathcal{L}^{(k)}_t}{\sqrt{N^k}} 
    \]
\end{theorem}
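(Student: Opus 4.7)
The plan is to fix a team $k \in \teamset$ with opponents playing $\scorlawteam{-k}$ and sandwich both the finite population cost under $\scorlawteam{k}$ and the finite population best-response cost between the infinite population value function $\widebar{V}^{(k)}_t$ of Theorem~\ref{thm:MPE-inf} and error terms controlled by Lemma~\ref{lem:ais-inf}. Adding the two sandwich inequalities then produces the factor of $2$ in the stated bound.

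First I would invoke Proposition~\ref{prop:IS} to restrict the finite population best response of team $k$ to Markov strategies $\corlawteam{k}_t\colon \MFSpace \to \Prescriptionteam{k}$; this is justified because $\MF_t$ is an information state for team $k$'s best-response problem when opponents play the Markov strategy $\scorlawteam{-k}$. Let $V^{(k)}_t(\mf_t;\scorlaw)$ denote the finite population cost-to-go from $\MF_t = \mf_t$ when all teams follow $\scorlaw$, and let $V^{*(k)}_t(\mf_t)$ denote team $k$'s best-response cost-to-go against $\scorlawteam{-k}$. Both obey Bellman recursions driven by the true dynamics $Q(\cdot\mid\mf_t,\prescription_t)$ from~\eqref{eq:mf-evolve}, whereas $\widebar{V}^{(k)}_t$ from~\eqref{eq:DP:t-inf} is driven by the deterministic dynamics $\bar q$ from~\eqref{eq:mf-evolve-inf-total}.

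The core of the argument is a pair of backward inductions establishing, for every $t$ and $\mf_t$,
\begin{align*}
V^{(k)}_t(\mf_t;\scorlaw) - \widebar{V}^{(k)}_t(\mf_t)
  &\le \sum_{s=t+1}^{T} \sum_{k'\in\teamset} \frac{\kappa\,\mathcal{L}^{(k)}_s}{\sqrt{N^{(k')}}},\\
\widebar{V}^{(k)}_t(\mf_t) - V^{*(k)}_t(\mf_t)
  &\le \sum_{s=t+1}^{T} \sum_{k'\in\teamset} \frac{\kappa\,\mathcal{L}^{(k)}_s}{\sqrt{N^{(k')}}}.
\end{align*}
At each inductive step I would write the cost-to-go as a stage cost plus a continuation. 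By Lemma~\ref{lem:ais-inf}(1) the stage cost is identical across the two games, so the only discrepancy comes from the continuation. Using the Kantorovich--Rubinstein duality together with the $\mathcal{L}^{(k)}_{t+1}$-Lipschitz hypothesis on $\widebar{V}^{(k)}_{t+1}$ under the metric~\eqref{eq:z-metric} gives
\[
\bigl|\EXP[\widebar{V}^{(k)}_{t+1}(\MF_{t+1})\mid\MF_t=\mf_t,\Prescription_t=\prescription_t] - \widebar{V}^{(k)}_{t+1}(\bar q(\mf_t,\prescription_t))\bigr| \le \mathcal{L}^{(k)}_{t+1}\,\W\bigl(Q(\cdot\mid\mf_t,\prescription_t),\bar q(\mf_t,\prescription_t)\bigr),
\]
and Lemma~\ref{lem:ais-inf}(2) bounds this Wasserstein distance by $\sum_{k'\in\teamset}\kappa/\sqrt{N^{(k')}}$. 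The first sandwich inequality is immediate because $\scorlawteam{k}_t(\mf_t)$ is simply evaluated on both sides, introducing no minimization gap. The second uses the defining property of $\scorlaw$ as an infinite population MPE: $\scorlawteam{k}_t(\mf_t)$ minimizes the Bellman operator in~\eqref{eq:DP:t-inf}, so replacing it by any competing prescription chosen by a deviation $\corlawteam{k}$ only raises the $\widebar{V}^{(k)}_t$-side of the comparison. Adding the two bounds at $t=1$ and taking expectations over $\MF_1$ produces
\[
V^{(k)}_1(\MF_1;\scorlaw) - V^{*(k)}_1(\MF_1) \le 2\sum_{t=1}^{T}\sum_{k'\in\teamset}\frac{\kappa\,\mathcal{L}^{(k)}_t}{\sqrt{N^{(k')}}},
\]
which is precisely the regret bound in the theorem once we identify $\widebar{L}^{(k)}(\scorlawteam{k},\scorlawteam{-k})$ and $\widebar{L}^{(k)}(\corlawteam{k},\scorlawteam{-k})$ as finite population realizations of the two cost-to-go functions at $t=1$.

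The principal obstacle is the second inequality, which must hold uniformly over all finite population deviations $\corlawteam{k}$, including non-Markov ones that could in principle exploit the stochasticity of $Q$. Proposition~\ref{prop:IS} resolves this by permitting a reduction to Markov deviations without loss of optimality, after which the induction proceeds prescription-by-prescription via the infinite population optimality of $\scorlawteam{k}_t(\mf_t)$. A secondary technical concern is that the Lipschitz hypothesis on $\widebar{L}^{(k)}_t$ must be with respect to the metric $\W$ from~\eqref{eq:z-metric} used in Lemma~\ref{lem:ais-inf}(2); compatibility with this metric is a standard propagation-of-Lipschitz-constants calculation through~\eqref{eq:DP:t-inf} and holds whenever $\costteamagent{k}_t$, $\stprobteam{k}$ and $\scorlawteam{k}_t$ are suitably Lipschitz, which I would verify at the outset of the proof.
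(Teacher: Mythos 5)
Your proposal is correct and follows essentially the same route as the paper: the paper's proof verifies that the compression $\mf_{1:t}\mapsto\mf_t$, the deterministic dynamics $\bar q$, and the costs $\scostfunctionteam{k}_t$ form an approximate information state with cost error $\varepsilon_t=0$ and transition error $\delta_t=\sum_{k\in\teamset}\kappa/\sqrt{\agentsteam{k}}$ (exactly Lemma~\ref{lem:ais-inf}), and then invokes the approximation theorem of \citet{aisjmlr} as a black box. Your two-sided backward induction --- identical stage costs, Kantorovich--Rubinstein duality plus the Lipschitz hypothesis on $\widebar V^{(k)}_{t+1}$ to control the continuation, and summing the two sandwich inequalities to obtain the factor of $2$ --- is precisely the content of that cited theorem unfolded, and your explicit handling of the reduction to Markov deviations via Proposition~\ref{prop:IS} and of the compatibility of the Lipschitz hypothesis with the metric $\W$ supplies details the paper leaves implicit.
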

\begin{proof}
See Appendix~\ref{app:proof2}
\end{proof}

\section{Conclusion}\label{sec:conclusion}
In this paper, we presented a model for mean-field games among teams and presented a common-information based refinement of the Team-Nash equilibrium for this game. This common-information based Markov perfect equilibrium can be obtained by solving coupled dynamic programs. These dynamic programs use the mean field of all teams as a state. In general, solving such dynamic programs suffers from the curse of dimensionality. To circumvent this curse of dimensionality, we use  a mean-field limit to approximate finite population teams by an infinite population. We show that a Markov perfect equilibrium obtained using the mean-field approximations is an approximate Markov perfect equilibrium of the orignal game.

\bibliographystyle{abbrvnat}
\bibliography{IEEEabrv,mfgamesteams}

\appendix

\section{Proof of Theorem~\ref{thm:equiv}}\label{app:proof1}
We first establish some intermediate results.
\begin{lemma} \label{lem:pf-1-lem-1}
  Given a strategy $\pol$ of Game~\ref{game:original}, let $\corlawhist$ be a strategy of Game~\ref{game:common} constructed according to~\eqref{eq:NE-forward}. Then, for any time $t$ and any realizations $\mf_{1:t}$ and $\prescription_{1:t}$ of $\MF_{1:t}$ and $\Prescription_{1:t}$, we have that
  \[
    \PR^{\pol}(\MF_{1:T} = \mf_{1:T}, \Prescription_{1:T} = \prescription_{1:t})
    = 
    \PR^{\corlawhist}(\MF_{1:T} = \mf_{1:T}, \Prescription_{1:T} = \prescription_{1:t})
 \]
\end{lemma}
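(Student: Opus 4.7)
The plan is a straightforward forward induction on $t$ showing that the joint laws of $(\MF_{1:t}, \Prescription_{1:t})$ under $\PR^{\pol}$ (in Game~\ref{game:original}) and under $\PR^{\corlawhist}$ (in Game~\ref{game:common}) agree. The key structural fact I will lean on is that, in both games, (i) the initial mean-field distribution is strategy-independent, (ii) conditioning on the mean-field history the prescription is a deterministic map that coincides under the two strategies by the very construction~\eqref{eq:NE-forward}, and (iii) the conditional law of $\MF_{t+1}$ given $(\MF_{1:t}, \Prescription_{1:t})$ is the kernel $Q(\cdot \mid \mf_t, \prescription_t)$, which is given by Proposition~\ref{prop:IS}(P2) in Game~\ref{game:original} and is, by definition, the transition kernel of Game~\ref{game:common}.

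For the base case ($t=1$) I would observe that $\PR(\MF_1 = \mf_1)$ depends only on the initial state distributions $(\stprobteam{k}_0)_{k \in \teamset}$, which are common to both games, so the marginals of $\MF_1$ match. Then $\Prescriptionteam{k}_1 \mid \MF_1 = \mf_1$ is a point mass: under $\pol$ it equals $\polteam{k}_1(\cdot, \mf_1)$ by the definition of a prescription in~\eqref{eq:prescription}, and under $\corlawhist$ it equals $\corlawhistteam{k}_1(\mf_1) = \polteam{k}_1(\cdot, \mf_1)$ by~\eqref{eq:NE-forward}. Hence the two joint laws of $(\MF_1, \Prescription_1)$ coincide.

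For the induction step, assume the laws of $(\MF_{1:t}, \Prescription_{1:t})$ agree. I would factor
\[
\PR(\MF_{1:t+1} = \mf_{1:t+1}, \Prescription_{1:t+1} = \prescription_{1:t+1})
  = f_1 \cdot f_2 \cdot f_3,
\]
where $f_3 = \PR(\MF_{1:t} = \mf_{1:t}, \Prescription_{1:t} = \prescription_{1:t})$ matches by the inductive hypothesis; $f_2 = \PR(\MF_{t+1} = \mf_{t+1} \mid \MF_{1:t} = \mf_{1:t}, \Prescription_{1:t} = \prescription_{1:t}) = Q(\mf_{t+1} \mid \mf_t, \prescription_t)$ matches because Proposition~\ref{prop:IS}(P2) gives this kernel under $\pol$ in Game~\ref{game:original} and this is exactly the transition kernel defining Game~\ref{game:common}; and $f_1 = \PR(\Prescription_{t+1} = \prescription_{t+1} \mid \MF_{1:t+1} = \mf_{1:t+1}, \Prescription_{1:t} = \prescription_{1:t})$ is a product of point masses at $\polteam{k}_{t+1}(\cdot, \mf_{1:t+1}) = \corlawhistteam{k}_{t+1}(\mf_{1:t+1})$ under either strategy, again by~\eqref{eq:NE-forward}.

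I do not anticipate a serious obstacle; the only subtlety worth underlining is the dual role of the mean-field evolution kernel $Q$: it arises as a derived object in Game~\ref{game:original} (via Proposition~\ref{prop:IS}(P2), which in turn rests on~\eqref{eq:m-to-bar-m}--\eqref{eq:hat-m-to-m}) but is the defining primitive of the stochastic dynamics in Game~\ref{game:common}. Once that identification is made explicit, the induction closes immediately and the conclusion for $t=T$ is the claim of the lemma.
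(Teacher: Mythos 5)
Your proposal is correct and follows essentially the same argument as the paper: a forward induction on $t$, factoring the joint law into the inductive term, the mean-field transition kernel $Q$ (matched via Proposition~\ref{prop:IS}(P2) on one side and the definition of Game~\ref{game:common} on the other), and the prescription conditional (matched by the construction~\eqref{eq:NE-forward}). Your treatment of the base case is slightly more explicit than the paper's, but the substance is identical.
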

\begin{proof}
    For simplicity of notation, we write $\PR^{\pol}(\mf_{1:t}, \prescription_{1:t})$ instead of 
    $\PR^{\pol}(\MF_{1:T} = \mf_{1:t}, \Prescription_{1:T} = \prescription_{1:t})$ and use similar shortcuts for other terms as well.
    
    The proof proceeds by induction on $T$. For $t=1$, both sides of the equation do not depend on the strategies, and the result is trivially true. This forms the basis of induction. Now, suppose that the result is true for $t$ and consider the system at $t+1$. From property (P2) of Proposition~\ref{prop:IS}, we have
    \begin{equation}\label{pf:1-s-1}
        \PR^{\pol}(\mf_{t+1} \mid \mf_{1:t}, \prescription_{1:t})
        = Q( \mf_{t+1} \mid \mf_t, \prescription_t )
    \end{equation}
    and similarly
    \begin{equation}\label{pf:1-s-2}
        \PR^{\corlawhist}(\mf_{t+1} \mid \mf_{1:t}, \prescription_{1:t})
        = Q( \mf_{t+1} \mid \mf_t, \prescription_t ).
    \end{equation}
    Furthermore, the construction of strategy $\corlawhist$ implies that 
    \begin{equation}\label{pf:1-s-3}
        \PR^{\pol}(\prescription_{t+1} \mid \mf_{1:t+1}, \prescription_{1:t})
        =
        \PR^{\corlawhist}(\prescription_{t+1} \mid \mf_{1:t+1}, \prescription_{1:t}).
    \end{equation}
    Properties~\eqref{pf:1-s-1}--\eqref{pf:1-s-3} along with the induction hypothesis implies that
    \begin{align}
        \PR^{\pol}(\mf_{1:t+1}, \prescription_{1:t+1}) &=
        \PR^{\pol}(\prescription_{t+1} \mid \mf_{1:t+1}, \prescription_{1:t})
        \PR^{\pol}(\mf_{t+1} \mid \mf_{1:t}, \prescription_{1:t})
        \PR^{\pol}(\mf_{1:t}, \prescription_{1:t})
        \notag \\
        &=
        \PR^{\corlawhist}(\prescription_{t+1} \mid \mf_{1:t+1}, \prescription_{1:t})
        \PR^{\corlawhist}(\mf_{t+1} \mid \mf_{1:t}, \prescription_{1:t})
        \PR^{\corlawhist}(\mf_{1:t}, \prescription_{1:t})
        \notag \\
        &= \PR^{\corlawhist}(\mf_{1:t+1}, \prescription_{1:t+1}).
    \end{align}
    This completes the induction step.
\end{proof}
  
\begin{lemma} \label{lem:pf-1-lem-2}
  Given a strategy $\corlawhist$ of Game~\ref{game:common}, let $\pol$ be a strategy of Game~\ref{game:original} constructed according to~\eqref{eq:NE-reverse}. Then, for any time $t$ and any realizations $\mf_{1:t}$ and $\prescription_{1:t}$ of $\MF_{1:t}$ and $\Prescription_{1:t}$, we have that
  \[
    \PR^{\corlawhist}(\MF_{1:T} = \mf_{1:T}, \Prescription_{1:T} = \prescription_{1:t})
    = 
    \PR^{\pol}(\MF_{1:T} = \mf_{1:T}, \Prescription_{1:T} = \prescription_{1:t})
 \]
\end{lemma}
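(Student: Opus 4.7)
The plan is to mimic the proof of Lemma~\ref{lem:pf-1-lem-1} with the roles of $\pol$ and $\corlawhist$ interchanged: the argument is symmetric, since the only place the strategy enters into the joint distribution of $(\MF_{1:T}, \Prescription_{1:T})$ is in the conditional distribution of the prescription $\Prescription_{t+1}$ given the history $(\MF_{1:t+1}, \Prescription_{1:t})$, and the construction~\eqref{eq:NE-reverse} is designed so that this conditional matches on both sides. I will run an induction on $t$, using the same shorthand $\PR^{\pol}(\mf_{1:t}, \prescription_{1:t})$ for $\PR^{\pol}(\MF_{1:t} = \mf_{1:t}, \Prescription_{1:t} = \prescription_{1:t})$.

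For the base case $t = 1$, the marginal of $\MF_1$ depends only on the (strategy-independent) initial distributions $\stprobteam{k}_0$, so it agrees under $\pol$ and $\corlawhist$. The prescription $\Prescriptionteam{k}_1$ is a deterministic function of $\MF_1$: under $\corlawhist$ it is $\corlawhistteam{k}_1(\MF_1)$, and under $\pol$ it is the function $\st \mapsto \polteam{k}_1(\st, \MF_1)$, which by~\eqref{eq:NE-reverse} equals $\corlawhistteam{k}_1(\MF_1)(\st)$. Hence the two prescriptions are the same function of $\MF_1$, and the joint distributions at $t=1$ agree.

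For the inductive step, assume equality at time $t$ and factor the joint distribution exactly as in the proof of Lemma~\ref{lem:pf-1-lem-1}:
\begin{equation*}
\PR(\mf_{1:t+1}, \prescription_{1:t+1})
=
\PR(\prescription_{t+1} \mid \mf_{1:t+1}, \prescription_{1:t})
\cdot
\PR(\mf_{t+1} \mid \mf_{1:t}, \prescription_{1:t})
\cdot
\PR(\mf_{1:t}, \prescription_{1:t}).
\end{equation*}
The third factor is equal under both strategies by the induction hypothesis. The second factor depends only on $(\mf_t, \prescription_t)$, not on the strategy, and equals $Q(\mf_{t+1} \mid \mf_t, \prescription_t)$ under both $\pol$ and $\corlawhist$ by property (P2) of Proposition~\ref{prop:IS}. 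For the first factor, by definition $\Prescriptionteam{k}_{t+1}$ is a deterministic function of $\MF_{1:t+1}$ under both strategies: under $\corlawhist$ it equals $\corlawhistteam{k}_{t+1}(\MF_{1:t+1})$, and under $\pol$ it equals $\polteam{k}_{t+1}(\cdot, \MF_{1:t+1})$, which by~\eqref{eq:NE-reverse} is the same function as $\corlawhistteam{k}_{t+1}(\MF_{1:t+1})$. Hence the first factor matches as well, and the induction closes.

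There is no real obstacle; the only point worth stating carefully is that the conditional $\PR^{\pol}(\prescription_{t+1} \mid \mf_{1:t+1}, \prescription_{1:t})$ does not depend on $\prescription_{1:t}$ (because $\Prescriptionteam{k}_{t+1}$ is, by its definition via~\eqref{eq:prescription}, a function of $\MF_{1:t+1}$ alone), so that~\eqref{eq:NE-reverse} suffices to identify the two conditionals. Once this observation is made, the inductive step is a line-by-line mirror of equations (pf:1-s-1)--(pf:1-s-3) in the proof of Lemma~\ref{lem:pf-1-lem-1}.
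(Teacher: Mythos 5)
Your proof is correct and follows exactly the route the paper intends: the paper omits this proof as ``similar to the proof of Lemma~\ref{lem:pf-1-lem-1},'' and your argument is precisely that mirrored induction, with the construction~\eqref{eq:NE-reverse} replacing~\eqref{eq:NE-forward} in the step matching the conditional distribution of the prescriptions. The added remark that the prescription is a deterministic function of $\MF_{1:t+1}$ alone is a correct and worthwhile clarification, but does not change the approach.
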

The proof is similar to the proof of Lemma~\ref{lem:pf-1-lem-1} and is omitted.  

\begin{lemma}\label{lem:pf-1-lem-3}
    Given a strategy $\pol$ of Game~\ref{game:original}, let $\corlawhist$ be a strategy of Game~\ref{game:common} constructed according to~\eqref{eq:NE-forward}. Then,
    \[
        \perfteam{k}(\pol) = \Costfunctionteam{k}(\corlawhist).
    \]
\end{lemma}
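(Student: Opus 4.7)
The plan is to rewrite the team performance $\perfteam{k}(\pol)$ in Game~\ref{game:original} into an expectation over $(\MF_t, \Prescriptionteam{k}_t)$ using Proposition~\ref{prop:IS}(P1), and then transport this expectation to the measure induced by $\corlawhist$ in Game~\ref{game:common} using Lemma~\ref{lem:pf-1-lem-1}.

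First I would write $\perfteam{k}(\pol) = \EXP^{\pol}\bigl[\sum_{t=1}^{T} \costteam{k}_t\bigr]$ and, for each fixed $t$, apply the tower property to condition on the full history $(\MF_{1:t}, \Prescription_{1:t})$. By property (P1) of Proposition~\ref{prop:IS},
\begin{equation*}
\EXP^{\pol}\bigl[\costteam{k}_t \bigm| \MF_{1:t}, \Prescription_{1:t}\bigr]
= \costfunctionteam{k}_t(\MF_t, \Prescriptionteam{k}_t),
\end{equation*}
so pulling the outer expectation back gives $\EXP^{\pol}[\costteam{k}_t] = \EXP^{\pol}[\costfunctionteam{k}_t(\MF_t, \Prescriptionteam{k}_t)]$. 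This reduces the claim to showing that $\EXP^{\pol}[\costfunctionteam{k}_t(\MF_t, \Prescriptionteam{k}_t)] = \EXP^{\corlawhist}[\costfunctionteam{k}_t(\MF_t, \Prescriptionteam{k}_t)]$ for each $t$.

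Next I would invoke Lemma~\ref{lem:pf-1-lem-1}, which states that the joint law of $(\MF_{1:T}, \Prescription_{1:T})$ under $\pol$ coincides with the joint law under the constructed $\corlawhist$. Marginalizing to $(\MF_t, \Prescriptionteam{k}_t)$, the expectations of any bounded measurable function of these variables agree; in particular they agree for $\costfunctionteam{k}_t$. Summing the equality $\EXP^{\pol}[\costteam{k}_t] = \EXP^{\corlawhist}[\costfunctionteam{k}_t(\MF_t, \Prescriptionteam{k}_t)]$ over $t = 1, \dots, T$ then yields $\perfteam{k}(\pol) = \Costfunctionteam{k}(\corlawhist)$ directly from the definition of $\Costfunctionteam{k}$.

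There is no real obstacle here: both of the nontrivial ingredients, namely the conditional reduction of the team cost via property (P1) and the distributional equivalence of $(\MF_{1:T}, \Prescription_{1:T})$ via Lemma~\ref{lem:pf-1-lem-1}, are already in hand. The only care needed is to make sure the conditioning in (P1) is applied with the full history $(\MF_{1:t}, \Prescription_{1:t})$ on the left-hand side (so that the tower property cleanly produces the reduced cost) rather than with the Markovian pair $(\MF_t, \Prescriptionteam{k}_t)$ alone, so that no additional measurability assumptions on $\pol$ are invoked.
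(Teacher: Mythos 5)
Your proposal is correct and follows essentially the same route as the paper's proof: apply the tower property and Proposition~\ref{prop:IS}(P1) to replace $\costteam{k}_t$ by $\costfunctionteam{k}_t(\MF_t,\Prescriptionteam{k}_t)$ under $\EXP^{\pol}$, then use the distributional equivalence of Lemma~\ref{lem:pf-1-lem-1} to pass to $\EXP^{\corlawhist}$ and sum over $t$. No gaps.
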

\begin{proof}
  Arbitrarily fix a team~$k \in \teamset$ and consider
  \begin{equation}
      \EXP^{\pol}[ \costteam{k} ] \stackrel{(a)}= \EXP^{\pol}\bigl[ \EXP[ \costteam{k} \mid \MF_{1:t}, \prescription_{1:t}] \bigr] 
      \stackrel{(b)}= \EXP^{\pol}[ \costfunctionteam{k}(\MF_t,\prescriptionteam{k}_t) ]
      \label{eq:pf-1-step-1}
 \end{equation}
 where $(a)$ follows from the smoothing property of conditional expectation and $(b)$ follows from (P1) in Proposition~\ref{prop:IS}. Therefore,
 \begin{align}
    \perfteam{k}(\pol) &\stackrel{(c)}= \EXP^{\pol}\biggl[ \sum_{t=1}^T \costfunctionteam{k}(\MF_t,\prescriptionteam{k}_t) \biggr] 
    \notag \\
   &\stackrel{(d)}= \EXP^{\corlawhist}\biggl[ \sum_{t=1}^T \costfunctionteam{k}(\MF_t,\prescriptionteam{k}_t) \biggr]  
   \notag \\
   &= \Costfunctionteam{k}(\corlawhist)
 \end{align}
where $(c)$ follows from~\eqref{eq:pf-1-step-1} and $(d)$ follows from Lemma~\ref{lem:pf-1-lem-1}.
\end{proof}

\begin{lemma} \label{lem:pf-1-lem-4}
  Given a strategy $\corlawhist$ of Game~\ref{game:common}, let $\pol$ be a strategy of Game~\ref{game:original} constructed according to~\eqref{eq:NE-reverse}.  Then,
    \[
        \perfteam{k}(\pol) = \Costfunctionteam{k}(\corlawhist).
    \]
\end{lemma}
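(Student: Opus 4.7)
The plan is to mirror the argument of Lemma~\ref{lem:pf-1-lem-3}, replacing the use of Lemma~\ref{lem:pf-1-lem-1} with its converse, Lemma~\ref{lem:pf-1-lem-2}. The construction~\eqref{eq:NE-reverse} is designed precisely so that the prescription process $\Prescription_t$ induced by $\pol$ in Game~\ref{game:original} matches, in distribution, the prescription process generated by $\corlawhist$ in Game~\ref{game:common}; the cost identity will then follow by taking expectations.

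First, fix an arbitrary team $k \in \teamset$ and a time $t$. By the tower property and property~(P1) of Proposition~\ref{prop:IS},
\begin{equation*}
  \EXP^{\pol}[\costteam{k}_t]
  = \EXP^{\pol}\bigl[\EXP[\costteam{k}_t \mid \MF_{1:t}, \Prescription_{1:t}]\bigr]
  = \EXP^{\pol}[\costfunctionteam{k}_t(\MF_t, \Prescriptionteam{k}_t)].
\end{equation*}
The inner conditional expectation depends on $(\MF_{1:t}, \Prescription_{1:t})$ only through $(\MF_t, \Prescriptionteam{k}_t)$, so the identity is purely a statement about the joint distribution of $(\MF_t, \Prescriptionteam{k}_t)$. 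This is exactly the setting in which Lemma~\ref{lem:pf-1-lem-2} applies, so
\begin{equation*}
  \EXP^{\pol}[\costfunctionteam{k}_t(\MF_t, \Prescriptionteam{k}_t)]
  = \EXP^{\corlawhist}[\costfunctionteam{k}_t(\MF_t, \Prescriptionteam{k}_t)].
\end{equation*}

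Summing over $t = 1, \dots, T$ and applying linearity of expectation gives
\begin{equation*}
  \perfteam{k}(\pol)
  = \EXP^{\pol}\Bigl[\sum_{t=1}^T \costteam{k}_t\Bigr]
  = \EXP^{\corlawhist}\Bigl[\sum_{t=1}^T \costfunctionteam{k}_t(\MF_t, \Prescriptionteam{k}_t)\Bigr]
  = \Costfunctionteam{k}(\corlawhist),
\end{equation*}
which is the claim. The only nontrivial ingredient is Lemma~\ref{lem:pf-1-lem-2}; since that lemma is asserted in the excerpt and its proof is said to be analogous to Lemma~\ref{lem:pf-1-lem-1} (an induction on $t$ using properties~(P1)--(P2) of Proposition~\ref{prop:IS} together with the definition~\eqref{eq:NE-reverse}), there is no real obstacle here. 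The main subtlety to verify, if one were to write out Lemma~\ref{lem:pf-1-lem-2} in full, is that~\eqref{eq:NE-reverse} indeed produces a well-defined strategy whose prescription $\polteam{k}_t(\cdot, \MF_{1:t})$ at time~$t$ is $\mathcal{F}_{\MF_{1:t}}$-measurable and matches $\corlawhistteam{k}_t(\MF_{1:t})$ almost surely; this is immediate from the definition.
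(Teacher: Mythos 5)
Your proposal is correct and follows exactly the route the paper intends: it mirrors the proof of Lemma~\ref{lem:pf-1-lem-3} (tower property, then property~(P1) of Proposition~\ref{prop:IS} to replace $\costteam{k}_t$ by $\costfunctionteam{k}_t(\MF_t,\Prescriptionteam{k}_t)$), with Lemma~\ref{lem:pf-1-lem-2} substituted for Lemma~\ref{lem:pf-1-lem-1} to transfer the expectation from $\PR^{\pol}$ to $\PR^{\corlawhist}$. The paper omits this proof as "similar to that of Lemma~\ref{lem:pf-1-lem-3}," and your write-up is precisely that analogous argument.
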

The proof argument is similar to that of Lemma~\ref{lem:pf-1-lem-3} and is omitted.

Now we are ready to prove the result of Theorem~\ref{thm:equiv}. Let $\pol$ be a NE for Game~\ref{game:original} and let $\corlawhist$ be a strategy for Game~\ref{game:common} constructed according to~\eqref{eq:NE-forward}. Suppose $\corlawhist$ is not a NE for Game~\ref{game:common}. That is, there exists a team $k \in \teamset$ and a strategy $\corlawteam{k}$ for team~$k$ such that
\[
    \Costfunctionteam{k}(\corlawteam{k}, \corlawhistteam{-k}) 
    <
    \Costfunctionteam{k}(\corlawhistteam{k}, \corlawhistteam{-k}).
\]
Let $\bar \pi^{(k)}$ be the strategy for Game~\ref{game:original} corresponding to $\corlawteam{k}$ constructed according to~\eqref{eq:NE-reverse}. Then, Lemmas~\ref{lem:pf-1-lem-4} implies that
\[
   \perfteam{k}(\bar \pi^{(k)}, \polteam{-k}) 
   <
   \perfteam{k}(\polteam{k}, \polteam{-k}) 
\]
contradicting the fact that $\pol$ is a NE for Game~\ref{game:original}. Therefore, $\corlawhist$ must be a NE of Game~\ref{game:common}.

The second part of the theorem can be proved by an analogous argument.

\section{Proof of Theorem~\ref{thm:approxMPEInf}}\label{app:proof2}
\begin{proof}
Fix a virtual player $k$ and the strategy profile $\scorlawteam{-k}$ for the other virtual players and consider the best response dynamics at virtual player $k$ given by the dynamic program in Thm.~\ref{thm:MPE}. The idea of the proof is to show that the history compression function $\nu_t(\mf_{1:t}, \prescription_{1:t}) = \mf_t$ dynamics $(q^{(k)}_t)_{k \in \teamset}$ and the per-step cost $\scostfunctionteam{k}_t$ is an approximate information state (AIS) as defined in~\cite{aisjmlr}. In particular, we observe that:
\begin{align} \label{eq:ais-1}
    \EXP[\costfunctionteam{k}_t(\mf, \prescriptionteam{k}) - \scostfunctionteam{k}_t(\nu_t(\mf_{1:t}, \prescription_{1:t)}, \prescriptionteam{k})] = 0 \coloneqq \varepsilon_t,
\end{align}
and
\begin{align} \label{eq:ais-2}
    d_{\teamset}&(\PR(\MF_{t+1}|\MF_t = \mf_t, \Prescription_t = \prescription_t), q_t(\mf_t, \prescription_t)) \le\sum_{k \in \teamset}\frac{\kappa}{\sqrt{\agentsteam{k}}} \coloneqq \delta_t,
\end{align}
which follow from Lemma~\ref{lem:ais-inf}. Equations~\eqref{eq:ais-1}, \eqref{eq:ais-2} show that $(\nu_t, (q^{(k)}_t)_{k \in \teamset}, \scostfunctionteam{k}_t)$ is an AIS. Then, the result follows from~\cite[Theorem 9]{aisjmlr}.
\end{proof}

\end{document}